\def\psfancypar#1#2{\begingroup\def\par{\endgraf\endgroup\lineskiplimit=0pt}
               \setbox2=\hbox{\large\sc #2}
               \newdimen\tmpht \tmpht \ht2 \advance\tmpht by \baselineskip
               \font\hhuge=Times-Bold at \tmpht
               \setbox1=\hbox{{\hhuge #1}}
               \count7=\tmpht \count8=\ht1
               \divide\count8 by 1000 \divide\count7 by \count8 
               \tmpht=.001\tmpht\multiply\tmpht by \count7 
               \font\hhuge=Times-Bold at \tmpht
               \setbox1=\hbox{{\hhuge #1}}
               \noindent
                \hangindent1.05\wd1
               \hangafter=-2 {\hskip-\hangindent
               \lower1\ht1\hbox{\raise1.0\ht2\copy1}%
                \kern-0\wd1}\copy2\lineskiplimit=-1000pt}
\newcommand{\E}{\mbox{{\rm E}}}
\newcommand{\abf}{\mbox{${\bf a}$}}
\def\boxit#1{\vbox{\hrule\hbox{\vrule\kern3pt
        \vbox{\kern3pt#1\kern3pt}\kern3pt\vrule}\hrule}}
\def\reals{ { {\rm  I \kern-0.15em R }  } }
\def\complex{ {\,{{\rm C} \kern-0.50em \raise0.20ex {  |}}\, }}
\def\mubf{\hbox{\boldmath$\mu$\unboldmath}}
\def\pibf{\hbox{\boldmath$\pi$\unboldmath}}
\def\omegabf{\hbox{\boldmath$\omega$\unboldmath}}
\def\Sigmabf{\hbox{$\bf \Sigma$}}
\def\abf{{\bf a}}
\def\nbf{{\bf n}}
\def\obf{{\bf o}}
\def\sbf{{\bf s}}
\def\xbf{{\bf x}}
\def\ybf{{\bf y}}
\def\xbf{{\bf x}}
\def\ybf{{\bf y}}
\def\Abf{{\bf A}}
\def\Bbf{{\bf B}}
\def\Hbf{{\bf H}}
\def\Ibf{{\bf I}}
\def\Pbf{{\bf P}}
\def\Rbf{{\bf R}}
\def\Cc{{\cal C}}
\def\Hc{{\cal H}}
\def\Ic{{\cal I}}
\def\Nc{{\cal N}}
\def\Pc{{\cal P}}
\def\Rc{{\cal R}}
\def\Sc{{\cal S}}
\def\be{\vskip .3cm \begin{equation}}
\def\ee{\end{equation} \vskip .4cm \noindent}
\newcommand{\R}{\mbox{$\hat {\bf R}_{N}$}}
\def\Rxx{\Rbf_{\ssstyle X\kern-.1em X}}
\let\ssstyle=\scriptscriptstyle
\def\Kout{\setbox1=\hbox{\Huge\bf K}\hbox to
1.05\wd1{\hspace{.05\wd1}
\def\Sout{\setbox1=\hbox{\Huge\bf S}\hbox to 1.05\wd1{\hspace{.05\wd1}

  \ifx\LabelFigloaded\MYundefined\relax
  \else
    \message{ !!! labelfig.tex ALREADY loaded !!!}
   \fi

  \def\LabelFigloaded{\relax}


  \chardef\LabelFigCatAt\the\catcode`\@
  \catcode`\@=11

 \let\LabelFigwlog@ld\wlog
 \def\wlog#1{\relax}

 \ifx\\\MYundefined@
    \let\\\relax
 \fi


  \def\ms@g{\immediate\write16}

 \def\N@wif{\csname newif\endcsname }
 \def\Temp@ {\N@wif\ifIN@}
 \ifx\INN@\MYundefined@
    \else \let\Temp@\relax
 \fi
 \Temp@

  \def\IN@{\expandafter\INN@\expandafter}
  \long\def\INN@0#1@#2@{\long\def\NI@##1#1##2##3\ENDNI@
    {\ifx\m@rker##2\IN@false\else\IN@true\fi}%
     \expandafter\NI@#2@@#1\m@rker\ENDNI@}
  \def\m@rker{\m@@rker}
 
  \newtoks\Initialtoks@  \newtoks\Terminaltoks@
  \def\SPLIT@{\expandafter\SPLITT@\expandafter}
  \def\SPLITT@0#1@#2@{\def\TTILPS@##1#1##2@{%
     \Initialtoks@{##1}\Terminaltoks@{##2}}\expandafter\TTILPS@#2@}

 \def\Shifted@@#1#2#3{\setbox0=\hbox{#3}%
   \raise -\dp0\vbox {\kern-#2%
       \hbox {\kern#1\unhbox0\kern-#1}%
           \kern#2}}

 \newcount\gridcount
 \newbox\auxGridbox@ \newbox\hGridbox@ \newbox\vGridbox@
 \newbox\Labelbox@ \newbox\auxLabelbox@
 \newbox\Coordinatebox@
 \newtoks\Labeltoks@
 \newdimen\Wdd@ \newdimen\Htt@
 \newdimen\Wddd@ \newdimen\Httt@
 
 \def\Wr@{\immediate\write16}

 \newdimen\GL@wd
 \GL@wd=.02pt
 \def\GridLineWidth#1{\GL@wd=#1}

 \def\gobble#1{}
 \def\EdgeErr@{\Wr@{}%
      \Wr@{\string\Edges\space argument
      1, 10, 100 or 1000 please\string!}%
      }

 \newcount\Edgect@

 \def\Sweepup#1\endSweepup{}

 \def\SetEdges@{%
    \edef\Zr@@s{\expandafter\gobble\number\Edgect@\empty}%
        \count255=0\Zr@@s\relax
        \ifnum\count255=\z@\else\EdgeErr@\show\tailtest\fi
        \count255=1\Zr@@s\relax
        \ifnum\count255=\Edgect@\relax\else\EdgeErr@\show\leadtest\fi
    \EdgGl@b\edef\Zr@s{\expandafter\gobble\Zr@@s\empty}
    \ifnum\Edgect@>\@ne\relax\EdgGl@b\let\L@Dc\empty
        \else\EdgGl@b\edef\L@Dc{\string.}\fi
    \ifnum\Edgect@>\@ne\relax
        \EdgGl@b\edef\Edgescale@##1{\divide##1 by \Edgect@}%
        \else\EdgGl@b\edef\Edgescale@##1{}\fi
    }

 \def\Edges#1{\Edgect@=#1\relax
     \let\EdgGl@b\global \SetEdges@}

 \Edges{1}

 \def\hhrule{\hrule height \GL@wd\vskip-.\GL@wd}

 \def\hRule@{%
   \advance\gridcount -2%
   \vfil\hhrule\vfil
   \llap{\smash{\raise -2.5pt
     \hbox{\L@Dc\number\gridcount\Zr@s\kern2pt}}}%
   \hhrule
   }

\def\vvrule{\vrule width \GL@wd \kern-\GL@wd}

 \def\vRule@{\advance\gridcount 2%
   \hfil\vvrule\hfil
   \setbox\auxGridbox@=\vbox to 0pt
      {\vskip \Htt@\vskip 2pt
        \hbox to 0pt{\hss\L@Dc\number\gridcount\Zr@s\hss}\vss}%
      \wd\auxGridbox@=0pt \box\auxGridbox@
   \vvrule
   }

 \def\PlaceGrid@@{\gridcount=10 
  \setbox\hGridbox@=\hbox{%
        \hbox{%
             \hskip-.4pt\vrule
             \vbox to \Htt@{%
               \offinterlineskip\parindent=\z@\relax
               \hbox to \Wdd@{\hfil}
               \hRule@\hRule@\hRule@\hRule@
               \vfil\hhrule\vfil}%
             \vrule\hskip-.4pt}
    }%
  \gridcount=0%
  \setbox\vGridbox@=\hbox{%
      \vbox{\offinterlineskip\parindent=0pt\hsize=0pt
         \vskip-.4pt\hrule%
         \hbox to \Wdd@{%
                 \vtop to \Htt@{\vfil}%
                 \vRule@\vRule@\vRule@\vRule@
                 \hfil\vvrule\hfil}%
         \hrule\vskip-.4pt}}%
  \wd\hGridbox@=0pt\ht\hGridbox@=0pt
  \wd\vGridbox@=0pt\ht\vGridbox@=0pt
  \hbox{\box\hGridbox@\box\vGridbox@}%
  }

 \def\LabelsGlobal{\def\LabGl@b{\global}}
 \def\LabelsLocal{\def\LabGl@b{}}
 \LabelsGlobal 

 \def\SetLabels#1\endSetLabels{%
   \LabGl@b\Labeltoks@={#1()\\}%
   }

 \LabGl@b\Labeltoks@={()\\}

 \def\ShowGrid{\LabGl@b\let\PlaceGrid@\PlaceGrid@@}
 \def\HideGrid{\LabGl@b\let\PlaceGrid@\relax}
 \def\Grids{\ShowGrid\LabGl@b\let\GridSwitch@\ShowGrid}
 \def\noGrids{\HideGrid\LabGl@b\let\GridSwitch@\HideGrid}

 \noGrids

 \def\bAdjust@@{%
     \setbox\auxLabelbox@=\hbox{\raise \dp\auxLabelbox@
            \box\auxLabelbox@}}
 \def\bAdjust@{\let\vAdjust@\bAdjust@@}

 \def\eAdjust@@{\dimen0=-.5\ht\auxLabelbox@
     \advance\dimen0 by .5\dp\auxLabelbox@
     \setbox\auxLabelbox@=
            \hbox{\raise\dimen0\box\auxLabelbox@}}
 \def\eAdjust@{\let\vAdjust@\eAdjust@@}

 \def\tAdjust@@{%
     \setbox\auxLabelbox@=\hbox{\raise-\ht\auxLabelbox@
            \box\auxLabelbox@}}
 \def\tAdjust@{\let\vAdjust@\tAdjust@@}

 \let\vAdjust@\relax

 \def\lAdjust@{\let\hAdjust@\rlap}
 \def\rAdjust@{\let\hAdjust@\llap}

 \let\hAdjust@\relax\let\vAdjust@\relax

 \def\FetchLabel@#1(#2)#3\\{%
     \IN@0#2@@\ifIN@
        \setbox0=\hbox{\ignorespaces#1#3\unskip}%
        \ifdim\wd0>0pt
           \ms@g{}%
           \ms@g{ !!! Bad label(s)? !!!}%
           \message{ #1(#2)#3}%
        \fi
        \def\LabelMole@##1\endFetchLabel@{%
            \IN@0()\\@##1@%
            \ifIN@\def\Temp@{\FetchLabel@##1\endFetchLabel@}%
            \else\def\Temp@{}%
            \fi
            \Temp@
           }%
     \else
       \ignorespaces#1\unskip
       \setbox\auxLabelbox@=%
         \hbox to 0pt{\hss\ignorespaces\hAdjust@
          {\ignorespaces#3\unskip}\hss}%
       \vAdjust@
       \let\hAdjust@\relax\let\vAdjust@\relax
       \AugmentLabelBox@@{#2}%
       \ht\Labelbox@=0pt\dp\Labelbox@=0pt
       \let\LabelMole@\FetchLabel@%
     \fi\LabelMole@}

 \newtoks\XYSep@ 
 \def\SetXYSeparator#1{%
     \IN@0#1@@\ifIN@\XYSep@{*}%
     \else
     \XYSep@{#1}%
     \fi
     }

 \SetXYSeparator*

 \def\AugmentLabelBox@@#1{%
     \IN@0\the\XYSep@ @#1@\ifIN@
       \SPLIT@0\the\XYSep@ @#1@%
       \setbox\Labelbox@=\hbox to 0pt{%
         \unhbox\Labelbox@
         \Shifted@@{\the\Initialtoks@\Wddd@}%
         {\the\Terminaltoks@\Httt@}%
         {\box\auxLabelbox@}}%
     \else
         \ms@g{}%
         \ms@g{ !!! Bad insertion point. !!!}%
         \message{ (#1\ this point was rejected.)}%
     \fi
    }

 \def\FetchOption@#1[#2]#3\endFetchOption@{%
    \def\temp{#1}
    \ifx\temp\empty
       \Edgect@=#2\relax
       \let\EdgGl@b\relax
       \SetEdges@
       \Cleaner@#3%
    \fi}

 \def\Cleaner@#1[@]{\Labeltoks@{#1}}
     
 \def\PlaceLabels@@{\mathsurround=0pt
     \def\Cr@{\\}%
     \let\L\lAdjust@\let\R\rAdjust@
     \let\B\bAdjust@\let\E\eAdjust@\let\T\tAdjust@
     \expandafter\FetchOption@\the\Labeltoks@[@]\endFetchOption@
     \Wddd@=\Wdd@ \Edgescale@\Wddd@ 
     \Httt@=\Htt@ \Edgescale@\Httt@
     \expandafter\FetchLabel@\the\Labeltoks@\endFetchLabel@
     \box\Labelbox@
     }%

 \let \PlaceLabels@\PlaceLabels@@

 \def\AffixLabels#1{\setbox\Coordinatebox@=\hbox{#1}%
      \Wdd@=\wd\Coordinatebox@ \Htt@=\ht\Coordinatebox@
      \advance\Htt@ \dp\Coordinatebox@
      \hbox{\copy\Coordinatebox@\kern-\Wdd@ 
           \Shifted@@{0pt}{-\dp\Coordinatebox@}%
           {\PlaceLabels@\PlaceGrid@}%
           \kern\Wdd@}%
      \GridSwitch@ 
      \LabGl@b\Labeltoks@{()\\}%
      }
 
   \let\wlog\LabelFigwlog@ld   
   \catcode`\@=\LabelFigCatAt  


 
                                By

              Raymond S\'eroul <A18645@FRCCSC21.BITNET>
                                and 
              Laurent Siebenmann <lcs@topo.math.u-psud.fr>
    
              VERSIONS: July 1991, Oct 1991, Jan 1992, July 1992

INTRODUCTION

      This labelling package is intended for TeX users who
rely on non-TeX sources for for their graphics inserts.  It
provides means for adding TeX labels to such inserts with a
minimum of fuss. 

       For most labels, TeX users have in the past found it
reasonably convenient to rely on non-TeX sources. Typical
occasions when an inescapable need for TeX labels seemed to
arise are

 (a) when the graphics program lacks certain exotic or complex
mathematical symbols

 (b) when the very highest typographical quality is wanted for the
labels

 (c) when labels included with the graphics fail to print, 
 and you cannot figure out why (cf. boxedeps.doc).  The labels
 provided by labelfig.tex are 100

       Since this package first appeared, many users, who in the
past scarcely dreamed of using TeX labels, have come to use
nothing but.  So it is now appropriate to add

Intoxication Warning:  TeX labels may be addictive and expensive. 

     If you have a fast preview you may disagree, and even find
that this package provides an agreeable paste-up environment; see
extra applications at end.

     Note to publishers: It is possible and convenient to ultimately
export the TeX labels produced by labelfig.tex to become an integral
part of the EPS file. This is often desired by a publisher who typically
uses an "upmarket" graphics or page layout program, with which the
staff is skilled in perfecting figures.  See Appendix I for
a recipe.

     The authors are grateful to Patrick Ion of Math Reviews for
helpful comments and encouragement.

BASIC INSTRUCTIONS

    After reading in the macro file using

preview or proof your figure with a coordinate grid printed on
top, by typing the following:

    \ShowGrid  
    \AffixLabels{<the graphics insertion>}

Here <the graphics insertion> is what you would type to insert
the graphics object alone without the grid.  This must provide
for the space around it. For example <the graphics insertion>
might well be \BoxedEPSF{MyFigure scaled 700} using the
boxedeps.tex macro package (from same source); this provides a
TeX box containing the encapsulated PostScript insert specified by
the file MyFigure. \AffixLabels{...} provides the grid (supposing
\ShowGrid is present) and later, once you have specified labels
using the grid, it will "tack on" the labels.

     The grid is a sort of (usually elongated) checkerboard of
ten rows and ten columns and its (internal) partitions are by
default numbered  .1, ... ,.9  both horizontally (X-coordinate
running left to right) and vertically (Y-coordinate running bottom
to top).  Thus the points enclosed by the grid correspond to the
points of the unit square in the cartesian "X-Y" plane, the lower
left corner corresponding to the origin (0,0).  By extrapolation,
the full page corresponds to a larger rectangle in the plane.

     These coordinates serve to position labels as follows.
Before the \AffixLabels{...} command type label specifications:

  \SetLabels
   (<X-coordinate>*<Y-coordinate>) <first label> \\
   .
   .
   .
   (<X-coordinate>*<Y-coordinate>)  <last label> \\
  \endSetLabels

Each row specifies one label and is terminated by \\.  In each
row, the position indicator comes first; it is written as a
standard cartesian point except that the X- and Y- coordinates
are separated by * rather than a comma because TeX allows a
comma as decimal point. There are no dimension units to specify
as the unit is the grid itself.

     By default, this cartesian point specifies where the middle
of the baseline of the label will be located.  However if you precede
the point by \L [or \R] the left [or right] edge of the baseline will
be located there. Similarly you may also precede the point by \T, \E,
or \B to vertically align the top equator or bottom of the label box
at the specified point.  This gives nine standard positions of
the label with respect to the insertion point --- corresponding to
the eight principle points of the compas and the center

                     \L\T     \T      \R\T

                     \L\E     \E      \R\E

                     \L\B     \B      \R\B

But this neglects the default "baseline" level of TeX,
giving potentially three more positions

                     \L    <no tag>   \R

For text, the baseline level is often the preferred. Its relation to
the others is variable. It will often coincide with the bottom level,
as happens for "X".  But it is often distinct, as for "g", in which
case you have in all 12 distinct positions rather than 9.

     It is convenient to think of this specification of label
position as attaching the label by a thumb-tack to the coordinate
grid. There are up to twelve positions of the thumb-tack on the
label, while the position of the thumb-tack on the coordinate grid is
arbitrary.  Normally, one choses the position of the thumb-tack on
the label to be the one that is the closest to the item being
labeled.  There are good reasons for this "rule of thumb":

   (a)  It facilitates correct positioning at first try.

   (b)  If the scale of the figure must be altered after labels
have been affixed, the labels have a good chance of remaining well
positioned.

   (c)  The visible grid need not extend beyond the "bounding box"
for the figure, because the best preferred position is always
(at least almost) within the bounding box .

The second reason is particularly important. Indeed it often
happens that scale has to be altered after labelling begins, in
order to either provide space for the labels, or to adjust
proportions between the labels and the figure.  (The size of labels
is unaffected by scaling.)

     Here is an artificial but self-contained test which uses
TeX rules to make a graphics object.

TEST

    Do not skip this!



 \def\FrameIt#1{\hbox{\vrule$\vcenter {\hrule\kern3pt%
             \hbox {\kern3pt #1\kern3pt}%
               \kern3pt\hrule}$\relax\vrule}}

 \def\Caption#1#2{\FrameIt{%
       \vtop {\hsize=#1\relax \parindent=0pt
         \leftskip=0pt \rightskip=0pt plus15pt
         \parfillskip=0pt
         \lineskip=1pt\baselineskip=0pt
         #2}}}

 \def\FirstQuadrant{\hbox to 100pt{\vrule\vbox to 100pt{%
        \hbox to 100pt{\hfil}\vfil\hrule}\hss}}


  \SetLabels
    \R(.5*.2) $\zeta\,\cdot$\\
    (.9*-.10) $\xi$\\
    \R(-.03*.9) $\eta$\\
    \T(.5*.9) \Caption{70pt}{%
          \it The norm of
          $g(\xi+i\eta)$ is indicated on
          contours of this invisible surface.}\\
  \endSetLabels

  \AffixLabels{\FirstQuadrant}

  \end

  Note that the coordinates to use for labels are indicated on the
edges of the grid (when visible) corresponding to the conventional
x- and y- axes of the Cartesian plane. By default the grid is
1-by-1. However, by the command \Edges{100}, you can change this
to 100-by-100 and many users find this alternative most
convenient. Place the command \Edges{...} in your style file (or
header) since its effect is is global. Other possible edge values
are 10 and 1000.

  If you use the command \Edges{...} at all, do so with care.  For
if you accidentally delete an \Edges{...} command your labels will
abruptly be badly misplaced and may logically but mysteriously
generate "dimension too big" errors under TeX and "off page" errors
under your driver.  

  You can dictate the edgescale for an individual figure by giving
the scale in brackets immediately after \SetLabels.  Thus, to
import into an article using say \Edge{100} a figure labelled using
another edgescale, say the original 1-by-1 default, you can use
\SetLabels[1]...\endSetLabels.


GETTING IT DOWN PAT

     Complicated labeling deserves the same respect as
complicated mathematics.  Do not expect it to come out perfect the
first time!  What is needed in either case is a mechanism to
repeatedly typeset troublesome pieces.

     One mechanism is always available.  One does complicated
labelling in a separate "test" file involving just the figure being
labelled;  a texpert will know how to \dump TeX's current state as
a temporary format that restarts rapidly at each retry.  Usually,
one then pastes the completed labelled figure back into the main
TeX file, but, of course, one can also \input it as an auxiliary
file.

     If you do not have a TeXpert at handy, here is a first
approximation to an efficient setup. By deletions reduce a copy
of your article to just a few lines before and after the figure.
Now label the figure, and finally, copy and paste the labelled
figure to the original article. Then copy the next figure to label
into this testbed and repeat. The TeXpert can improve the  speed
at which TeX starts up, by compiling a format specifically for
your article; just one caution: best NOT include in the format
ephemeral details of setup like \Set<mydriver>ArtSpecials (from
boxedeps.tex because this reads  figure dimensions which you may
change during your work session.

     An improved mechanism to repeatedly typeset troublesome
pieces is now available on the Macintosh; it is called LinoTeX;
see the same ftp sources.  It could be set up on many types
of computer.

     Before using labelfig.tex to attach labels to a graphics
object inserted using boxedeps.tex or BoxedArt.tex, make it a
firm rule to carefully adjust the bounding box using the trimming
commands of these packages, and also at least tentatively scale
and position the object. Beware of changing the grid inadvertently
after the labels have been positioned.  For example, correcting
the bounding box of a PostScript graphics object can foul up the
labels by changing the coordinate grid to which the labels are
attached. This is particularly true for the trimming  commands of
boxedeps.tex and BoxedArt.tex. However, as noted already, change
of scale is much less disruptive, and modest adjustments should be
well tolerated.

     Sometimes the labels protrude so far from the bounding box
of a figure that the figure has to be repositioned.  Best do this
by ad hoc spacing, say using \hglue and \vglue; altering the
bounding box would create a vicious circle.

     Remember that you are responsible for preventing labels
from overlapping. You are responsible for all label typography
including size and style. A label is really just about anything
that can be put in a TeX box. Note that spaces at the beginning
and end of labels will normally be suppressed; if you really want
them you must protect them with TeX braces.

     This package temporarily sets the \mathsurround parameter
of TeX to zero  while the labels are being affixed. This is done
because nonzero \mathsurround space would influence the position
of left and right aligned labels; then, when a texpert or printer
modifies mathsurround, diagram labeling might be disastrously
altered. There is a small price to pay involving labels that are
formatted as caption boxes including mathematics: you  may want or
need to specify an explicit mathsurround space within the caption
box; it will not influence anything outside.

     Those hostile to the use of * as separator between
the X and Y coordinates of label insertion points, are free to
impose another using \SetXYSeparator{<the new separator>}.  
Americans may prefer "," to "*" since they never use a 
comma as a decimal point; on the other hand, * may be more visible.

APPENDIX (I)  MERGING labelfig.tex LABELS INTO AN EPSF GRAPHICS OBJECT.

     As promised in the introduction, here is a recipe useful for
publishers. It works at least on Macintosh and at least for vectorized
graphics and Adobe type1 fonts.  (There is surely a similar recipe for
PCs under MSWindows.)

 (a)  Use boxedeps.tex utility to integrate the figure given by the eps
file, "x.eps" say, with a visible frame around it.  See
\ShowDisplacementBoxes command in boxedeps.tex.  To get precise results
automatically it is important to use the \Trim... commands of
boxedeps.tex making the "DisplacementBox" neatly fit the figure.

 (b)  Use the TeX printer driver and LaserWriter (versions >= 8.1.1) to
export to an EPSF the DVI page containing the integrated, labelled
figure. You now have an EPS file  "xx.eps"  that contains too much, and at
the wrong scale, and at wrong position.

 (c)  Convert the EPSF to an Adode Illustrator format EPSF using
the shareware utility called epsConvert by Sam Weiss
1993-- (currently $25).

 (d)  In Illustrator (or a compatible program), group the labels and the
"DisplacementBox"; copy them to the clipboard and paste them into "x.ps".
This step requires that all the label fonts be "visible to the Macintosh.

 (e)  Translate and scale the pasted group consisting of the labels plus
the "DisplacementBox" so as to make the "DisplacementBox" the bounding
box of (labelless) figure represented by "x.eps".  At this point the
labels will be correctly placed on the figure "x.eps".

 (f)  Ungroup and delete the "DisplacementBox".  The result is the
desired single EPS file, "x+.eps" say, It contains the original figure
plus its labels.  

     Using grouping and ungrouping appropriately in "x+.eps", a
publisher's staff can very efficiently improve label positions etc.

APPENDIX II)  SOME EXOTIC APPLICATIONS

     The grid of labelfig.tex is analogous to a light-table in
classical page makeup with wax or latex glue.  In principle, you
can use it to compose any page from its indivisible parts.  This
even has some of the artisanal charm of classical paste-up
provided you have a fast screen preview to make the process
"interactive".

     In practice labelfig.tex is a tool for nonstandard jobs.
Here are a few going beyond the labelling already discussed.

(I)  GRAPHICS INTEGRATION.

     This is accomplished by treating the imported graphics
objects as labels.  The underlying graphics object is then
typically an empty  \vbox to <dimension>{\vfill} in a TeX
\midinsert...\endinsert construction.  A label line
might be of the form

   (.1*.1) \special{... MyFigure ...}\\

The exact form of the special command varies from driver to
driver.  However, in the case of encapsulated PostScript graphics
(EPSF norm), by relying on boxedeps.tex, one can have the
following standard syntax (independant of driver  (see
boxedeps.doc for details.
  
  (.1*.1) \BoxedEPSF{MyFigure scaled <scale in mils>}\\

This may be slow since it requires TeX to read the PostScript
file to read bounding box using many complex macros.  So you
may want to try

  (.1*.1) \EPSFSpecial{MyFigure}{<scale in mils>}\\

which is fast and driver independant, but it squashes the
bounding box, normally to its lower left corner.

     Similarly for graphics of the Macintosh PICT norm ---
using BoxedArt.tex (same sources) in place of boxedeps.tex.

     This approach to integration is to be recommended when
one is assembling a composite graphics object.

 (II)  COMMUTATIVE DIAGRAM ENHANCEMENT

     Commutative diagrams or arrays of mathematical objects
connected by arrows of various sorts are common in mathematics.
The mathematical objects require the use of TeX.  Recently TeX
acquired a good collection of arrows of all slopes --- that of
LamSTeX --- plus pwerful macros to build the diagrams.

     However, even the LamSTeX collection is often
inadequate; it lacks for example double shafted arrows, dotted
arrows and curved arrows. Fortunately it is possible to produce
such arrows on an individual basis using sophisticated graphics
programs such as Illustrator and AldusFreehand (both serving
the EPSF norm) or using Metafont (with its public domain norm).
Since the creation of each new arrow is a work of love, you
probably want to limit the number of arrows by using LamSTeX
for most arrows. The 40K commutative diagram module of LamSTeX
has been adapted to work with AmSTeX and a copy may be posted
with LabelFig and related files. Unfortunately no one has yet
offered a version that works with Plain TeX or LaTeX.

       Suffice it here to say that when the exotic arrow has
been somehow imported into TeX, labelfig.tex treats it as a
label that one affixes to the commutative diagram.  Two other
steps will be treated in separate notes, namely the matter of
extracting the dimension specifications for the arrow and the
construction of the arrow --- for these steps are far from
unique and often depend intimately on your computer environment. 
Notes for the Macintosh-Textures-Illustrator combination are
found in the file ExoticArrows.doc.

 (III) NESTING 

Ingenuity pays off in exploiting labelfig.tex. One can
mix graphics and typography quite freely.  labelfig.tex is good
for freeform or overlapping arrangements, while boxedeps.tex (or
BoxedArt.tex) is best for regimented non-overlapping
arrangements --- and the two can be combined.

     The default behavior of labelfig.tex is not ideal 
for nesting objects, because to prevent trouble for beginners
the register for labels is globally cleared when \AffixLabels
concludes.  But there are switches available

      \LabelsGlobal      \LabelsLocal

which change this.  To understand this, extend the above test 
by something like:


 \LabelsLocal

 \SetLabels
    (.5*.5) AAA\\
 \endSetLabels

 {
 \SetLabels
    (.5*.5) ZZZ\\
 \endSetLabels
   \AffixLabels{\FirstQuadrant}
 }

   \AffixLabels{\FirstQuadrant}


     There are however potential pitfalls.  Neither
labelfig.tex nor boxedeps.tex has been tested under extreme
conditions. Problems may occur if their procedures are
indiscriminately nested. For boxedeps.tex (not labelfig.tex)
there is a precise cause for worry, namely many of its
variables are "global", which means that TeX braces will not
provide the protection one might expect.

COMMAND SUMMARY FOR labelfig.tex

  Here [...] means optional (one or zero)
       [...]* means any number of such constructs

  \SetLabels
    [[<P>](<X><Sep><Y>) <label> \\]*
  \endSetLabels
  \ShowGrid  
  \AffixLabels{<the figure>}

   --- <P> is tack position, one of eleven or empty
              order irrelevant

                   \L\T      \T      \R\T

                   \L\E      \E      \R\E

                     \L               \R

                   \L\B      \B      \R\B

   --- (<X><Sep><Y>) insertion point;
  <Sep> is separator, = * by default;
  \SetXYSeparator{<Sep>} changes it.
   <X> and <Y> are real numbers

  --- <label> a label to attach 

  --- <the figure> the figure to label 

  \GlobalLabels (default)     
  \LocalLabels  setting for nested constructs.

 \Grids makes ALL grids appear; \HideGrid then makes just next disappear.
 \noGrids returns to default.  The commands are always global.

 \GridLineWidth{<dimension>} adjusts width of grid lines. Default is very
small, to give "hairline" effect. If your grid lines are missing try
setting \GridLineWidth{1pt}.

 \Edges#1 globally changes the edge size of all grids to the numerical 
value #1, which must be 1, 10, 100, or 1000.  The default is 1.

VERSION HISTORY.
 --- Jan 1993: \Edges#1 and [??] option after \SetLabels
 --- July 1992: \Grids, \noGrids, \HideGrid;
       Gridlines become hairlines; \GridLineWidth{<dimension>}.
 --- Oct 1991, Jan 1992: \SetXYSeparator{<Sep>},  \LabelsGlobal,
       \LabelsLocal.
 --- July 1991: first release

Address for bugs and other feedback:

        Raymond S\'eroul
        IREM and Lab. de Typographie Informatise
        Univ. Rene Descartes
        Strasbourg

    Tel 33-88-41-63-45
    Email:  A18645@FRCCSC21.BITNET

        Laurent Siebenmann
        Mathematique, Bat. 425,
        Univ de Paris-Sud,
        91405-Orsay,
        France

    Tel 33-1-6941-7949; 
    Email: lcs@topo.math.u-psud.fr

\def\scalefig#1{\epsfxsize #1\textwidth}

\newcommand {\Ebb}{{\mathbb{E}}}

\newtheorem{assumption}{Assumption}

\newtheorem{remark}{Remark}

\newtheorem{proposition}{Proposition}

\setcounter{footnote}{1}


\title{\LARGE {Training Beam Sequence Design for Millimeter-Wave MIMO Systems: A POMDP Framework}}

\author{
 Junyeong Seo, {\em Student~Member, IEEE}, Youngchul
Sung$^\dagger$\thanks{$^\dagger$Corresponding author}, {\em
Senior~Member, IEEE} \\ Gilwon Lee, and Donggun Kim, {\em Student~Members, IEEE} \\
\thanks{The authors are with Dept. of Electrical Engineering,  KAIST, Daejeon 305-701, South
Korea. E-mail:\{jyseo@, ysung@ee., gwlee@, and dg.kim@\}kaist.ac.kr.
This research was supported by Basic Science Research Program through the National Research Foundation of Korea (NRF) funded by the Ministry of Education (2013R1A1A2A10060852).}
}

\markboth{\protect\footnotesize Submitted to {\it IEEE Transactions on
Wireless Communications}, \today}{Seo, Sung, Lee, and Kim}

\begin{document}

\maketitle

\begin{abstract}
In this paper,  adaptive training beam sequence design for
efficient channel estimation in large millimeter-wave (mmWave)
multiple-input multiple-output (MIMO) channels is considered. By
exploiting the sparsity in large mmWave MIMO channels and imposing
a Markovian random walk assumption on the movement of the receiver
and reflection clusters, the adaptive training beam sequence
design and channel estimation problem  is formulated as a
partially observable Markov decision process (POMDP) problem that
finds non-zero bins in a two-dimensional grid. Under the proposed
POMDP framework, optimal and suboptimal adaptive training beam sequence design
policies are derived. Furthermore, a very fast suboptimal greedy
algorithm is developed based on a newly proposed reduced
sufficient statistic to make the computational complexity of the
proposed algorithm low to a level for practical implementation.
Numerical results are provided to evaluate the performance of the
proposed training beam design method. Numerical results show that
the proposed training beam  sequence design algorithms yield good
performance.
\end{abstract}

\begin{keywords}
Millimeter Wave, MIMO, Channel Estimation, Training Signal Design, POMDP
\end{keywords}

\section{Introduction}

The use of the mmWave frequency band allows wide
bandwith for high data rates required for future wireless
networks. However, mmWave signals experience severe large-scale
pathloss compared to lower frequency band signals, which has been
a hurdle to using the mmWave band for commercial wireless access
networks so far.  Recently, active research is going on to use the
mmWave band for cellular systems by exploiting advanced hardware
and software processing power \cite{Venkateswaran&Veen:10SP,
Ayach&HeathEtAl:12ICC, Alkhateeb&Ayach&Leus&Heath:13ITA,
Alkhateeb&Ayach&Leus&Heath:14arxiv}.  One of the major techniques
to compensate for the large pathloss in the mmWave band is highly
directional beamforming based on large antenna arrays
\cite{Ayach&HeathEtAl:12ICC, Alkhateeb&Ayach&Leus&Heath:13ITA,
Alkhateeb&Ayach&Leus&Heath:14arxiv} which can be implemented in
small sizes in the mmWave band \cite{Doan&Emami&Sobel:04ComMag}.
Typically such beamforming requires channel state information
(CSI) at the transmitter and the receiver, but the CSI is
difficult to acquire in the mmWave band due to the propagation
directivity and the low signal-to-noise ratio (SNR) before
beamforming due to large pathloss.  Thus, efficient and accurate
channel estimation is one of the key requirements for the success of
large mmWave MIMO systems
\cite{Ayach&HeathEtAl:12ICC, Alkhateeb&Ayach&Leus&Heath:13ITA,
Alkhateeb&Ayach&Leus&Heath:14arxiv} .

Conventional training-based MIMO channel estimation methods
typically assume rich scattering environments or the
knowledge\footnote{The knowledge of the channel covariance matrix
in the rank-deficient channel case implies that the propagation
ray directions are known to the transmitter {\it a priori}.} of
the channel covariance matrix in the rank-deficient channel case
\cite{Telatar:00Euro, Hassibi&Hochwald:03IT, Kotecha&Sayeed:04SP,
Choi&Love&Bidigare:14JSTSP,
Noh&Zol&Sung&Love:14JSTSP,So&Kim&Lee&Sung:14SPL}. However, such
assumption may not be valid in large mmWave MIMO systems due to
the high propagation directivity and the narrow beam width
associated with large antenna arrays
\cite{Bajwa&Haupt&Sayeed&Nowak:10IEEE}. Typical mmWave channels
with large antenna arrays can be modeled as sparse MIMO channels
with the virtual channel representation and the directions  of ray
clusters are unknown to the transmitter and the receiver beforehand
\cite{Taubock&Hlawatsch:08ESP, Bajwa&Haupt&Sayeed&Nowak:10IEEE,
Alkhateeb&Ayach&Leus&Heath:13ITA,
Alkhateeb&Ayach&Leus&Heath:14arxiv} (see Fig.
\ref{fig:channel_model}). Thus, conventionally designed training
signals and channel estimation methods aiming at the lower
frequency band are less efficient, and the training signal design
and channel estimation are  more challenging in the mmWave case.
One way to identify a sparse channel is to transmit all beam
directions sequentially in time and pick the direction of the
largest signal magnitude \cite{Wang&Lan&PyoEtAl:09JSAC,
Hur&Kim&LoveEtAl:13COM}. However, such a method is not efficient
when the number of all possible directions to search is large as
in the large mmWave MIMO case.  To tackle the challenge of sparse
MIMO channel estimation, algorithms based on compressed sensing (CS) theory have recently
been developed \cite{Taubock&Hlawatsch:08ESP,
Bajwa&Haupt&Sayeed&Nowak:10IEEE, Alkhateeb&Ayach&Leus&Heath:13ITA,
Alkhateeb&Ayach&Leus&Heath:14arxiv}. In \cite{
Bajwa&Haupt&Sayeed&Nowak:10IEEE}, the problem of channel
estimation in large mmWave MIMO  was formulated
 to capture the sparse nature of the channel and CS tools were applied to analyze the sparse channel estimation performance.
In particular, in \cite{Alkhateeb&Ayach&Leus&Heath:13ITA,
Alkhateeb&Ayach&Leus&Heath:14arxiv}, Alkhateeb {\it et al.}
proposed a channel estimation and training beam design method for
large mmWave MIMO systems based on adaptive CS. In their method,
the channel estimation is performed over multiple blocks under the
assumption that the channel does not vary over the considered
multiple blocks. Each block consists of multiple training beam
symbol times so that the sparse recovery is feasible at each
block, and the next training beam is adaptively designed based on
the previous block observation result by using a space bisection
approach which is a reasonable choice to search a propagation ray cluster in the space.

In this paper, exploiting the channel dynamic, we propose a
different training beam design and channel estimation paradigm for
 sparse large mmWave MIMO systems based on a decision-theoretical
framework. We  consider a typical time-duplexed training
structure, where one slot consists of a block of pilot symbol
times and the following symbol times for data transmission
\cite{Tong&Sadler&Dong:04SPM}. We assume that still a highly
directional narrow pilot beam should be transmitted at each
training symbol time to compensate for large pathloss  and obtain
a reasonable quality channel gain estimate in the mmWave band.
Then, the training beam design problem reduces to the problem of
choosing the directions of the pilot beams in the space to find
the actual propagation paths generated by line-of-sight and/or
reflection clusters, as shown in Fig. \ref{fig:channel_model}.
Specifically, with the virtual channel representation
\cite{Bajwa&Haupt&Sayeed&Nowak:10IEEE}, the sparse channel
estimation reduces to finding the locations and values of the
non-zero valued bins in a two-dimensional grid. Our main idea is
to impose a Markovian random walk structure on the movement of the
mobile station and the reflection clusters\footnote{This
assumption seems reasonable when we consider the physics of the
mobile station or the reflection clusters. For the example of a
pedestrian user with a certain speed of walking, the propagation
ray directions at the next slot changes from the current
directions and this uncertain change can be captured as a random
walk.}. In the proposed scheme, a set of pilot beams is
transmitted at each slot and the set of pilot beams at the current
slot is adaptively and optimally determined based on the
observation over all the previous slots to maximize a properly
defined reward accumulated over a given communication period.
Since we cannot observe all possible ray directions in the space
at each slot and the identification of a path can be wrong, the
pilot beam design under this formulation reduces to a {\em
partially observable Markov decision process (POMDP)}
\cite{PutermanBook}, and the theory of POMDP can be applied to the
pilot beam design problem for large mmWave MIMO. Under the
proposed POMDP formulation optimal and suboptimal greedy
strategies for training beam sequence design for sparse large MIMO
channel estimation are derived.  However, the direct application
of standard POMDP theory yields an intractable number of states
and unfeasible complexity. Thus, exploiting the specific structure
of the mobile communication channel and deriving a new
reduced-size sufficient statistic for the decision process, we
develop
 a greedy algorithm with significantly reduced complexity so that the proposed algorithm can practically be operated. Numerical results
 show that the proposed low complexity suboptimal algorithm yields comparable performance
relative to the optimal algorithm, and the proposed
training beam design algorithms efficiently estimate and tract sparse MIMO
channels. (A preliminary version of this work was submitted to ICC
2015 \cite{Seo&Sung&Lee&Kim:15ICCsub}.)

 {\it Notations and Organization} ~~~ We will use standard notational conventions in this paper.
Vectors and matrices are written in boldface with matrices in capitals.
For a matrix $\Abf$, $\Abf^T$, $\Abf^H$, $[\Abf]_{ij}$, $\Abf(:,k)$, and $\mbox{tr}(\Abf)$ indicate the transpose,  conjugate  transpose, the element of
the $i$-th row and the $j$-th column,  the $k$-th column, and trace  of $\Abf$, respectively.
$\Ibf_n$ stands for the identity matrix
of size $n$ and ${\mathbf{1}}_n$ stands for the matrix of size $n$ whose entries are all one. $\Abf \otimes \Bbf$ is the Kronecker product of $\Abf$ and $\Bbf$. The notation $\xbf\sim
\Cc(\mubf,\Sigmabf)$ means that $\xbf$ is
complex Gaussian distributed with mean vector $\mubf$ and
covariance matrix $\Sigmabf$.
$\Ebb\{\cdot\}$ denotes the
expectation. $|\abf|$ and $\|\abf\|_0$ denote the number of
elements and the number of non-zero elements of $\abf$, respectively. $\iota:=\sqrt{-1}$.

This paper is organized as follows. In Section \ref{sec:systemmodel}, the system model is explained.
In Section \ref{sec:POMDPformulation}, the optimal
training beam sequence design problem is formulated as a POMDP problem. In Section \ref{sec:strategy},
 optimal and suboptimal strategies are presented and a greedy
pilot beam design algorithm with low complexity is derived. Numerical
results are provided in Section \ref{sec:NumericalResult},
followed by conclusions in Section \ref{sec:conclusion}.

\section{System Model}
\label{sec:systemmodel}

\subsection{Sparse Channel Modeling in Large mmWave MIMO Systems}
\label{subsec:sparsChannelModel}

We consider a  mmWave MIMO system, where a transmitter equipped
with a uniform linear array (ULA) of $N_t$ antennas communicates
to a receiver equipped with a ULA of $N_r$ antennas. The received
signal at symbol time $n$  is then given by
\begin{equation}
\ybf_n= \Hbf_n \xbf_n + \nbf_n, ~~ n=1,2,\cdots,
\end{equation}
where $\Hbf_n$ is the $N_r\times N_t$ MIMO channel matrix at time
$n$, $\xbf_n$ is the $N_t \times 1$ transmit symbol vector at time
$n$ with a power constraint $\mbox{tr}(\Ebb\{ \xbf_n \xbf_n^H \})
\le P_t$, and $\nbf_n$ is the $N_r \times 1$ Gaussian noise vector
at time $n$ from $\Cc\Nc({\bf 0}, \sigma_N^2 \Ibf_{N_r})$.

A physical multipath channel accurately modeling $\Hbf_n$ is given by  \cite{Sayeed:02SP,Sayeed&Raghavan:07JSTSP}
\begin{equation}\label{eq:physical_channelmodel}
\Hbf_n = \sqrt{N_tN_r}\sum_{l=1}^L
\alpha_{n,l}\abf_{RX}(\theta_{n,l}^r)
\abf_{TX}^H(\theta_{n,l}^t),
\end{equation}
where $\alpha_{n,l} \sim \Cc\Nc(0,\xi^2)$ is the complex gain
of the $l$-th path at time $n$, and $\theta_{n,l}^r$ and
$\theta_{n,l}^t$ are the angle-of-arrival (AoA) and
angle-of-departure (AoD) normalized directions of the $l$-th
path at time $n$ for the receiver and the transmitter,
respectively. Here, the normalized direction $\theta$ is related
to the physical angle $\phi \in [-\pi/2,\pi/2]$ as
\begin{equation}
\theta = \frac{d\sin(\phi)}{\lambda},
\end{equation}
where $d$ and $\lambda$ are the spacing between two adjacent antenna elements and the signal wavelength, respectively.
We assume $\frac{d}{\lambda} = \frac{1}{2}$ and thus, the range of $\theta$ is $[-\frac{1}{2}, \frac{1}{2}]$.
In \eqref{eq:physical_channelmodel}, $\abf_{RX}(\theta^r)$ and $\abf_{TX}(\theta^t)$
are the receiver response and the transmitter steering vector in normalized
directions $\theta^r$ and $\theta^t$, respectively, which are defined as
\cite{Sayeed&Raghavan:07JSTSP}
\begin{align}
\abf_{RX}(\theta^r) &= \frac{1}{\sqrt{N_r}}
[1,e^{-\iota 2 \pi \theta^r},\cdots,e^{-\iota(N_r -1) 2 \pi  \theta^r}]^T, \\
\abf_{TX}(\theta^t) &= \frac{1}{\sqrt{N_t}}
[1,e^{-\iota 2 \pi \theta^t},\cdots,e^{-\iota(N_t -1) 2 \pi  \theta^t}]^T.
\end{align}
Note that $||\abf_{RX}(\theta^r)||=||\abf_{TX}(\theta^t)||=1$.
Neglecting the angle quantization error, we can map the physical
MIMO channel matrix $\Hbf_n$  to a virtual channel matrix (VCM)
$\widetilde{\Hbf}_{n}$  through the following relationship
\cite{Sayeed:02SP}
\begin{equation}
\Hbf_n = \Abf_R \widetilde{\Hbf}_{n} \Abf_T^H,
\end{equation}
where $\Abf_R =
[\abf_{RX}(\tilde{\theta}^r_{1}),\cdots,\abf_{RX}(\tilde{\theta}^r_{N_r})]$,
$\tilde{\theta}^r_{i}=-\frac{1}{2} + \frac{i-1}{N_r}$ for
$i=1,\cdots,N_r$, and $\Abf_T =
[\abf_{TX}(\tilde{\theta}^t_{1}),\cdots,\abf_{TX}(\tilde{\theta}^t_{N_t})]$,
$\tilde{\theta}^t_{j}=-\frac{1}{2} + \frac{j-1}{N_t}$ for
$j=1,\cdots,N_t$. (From here on, we will neglect the angle
quantization error.) The $(i,j)$-th element of $\widetilde{\Hbf}_{n}$
represents the channel gain scaled by $\sqrt{N_tN_r}$ when the virtual angles seen by the
receiver and the transmitter are $\tilde{\theta}^r_{i}$ and
$\tilde{\theta}^t_{j}$, respectively. For the element of $\widetilde{\Hbf}_{n}$ corresponding to the $l$-th propagation path, the element value
is given by $\sqrt{N_tN_r}\alpha_{n,l}$.
The physical channel model
\eqref{eq:physical_channelmodel} induces a sparse property to the
virtual channel representation,  given by
\begin{equation}
\sum_{j=1}^{N_t} ||\widetilde{\Hbf}_{n}(:,j)||_0 = L.
\end{equation}
That is, there are only $L ~(\ll N_tN_r)$ non-zero elements among
the $N_t N_r$ entries of the VCM $\widetilde{\Hbf}_{n}$.
\begin{figure}[t]
\begin{psfrags}
        \psfrag{lambda/2}[c]{\small $d$} %
        \psfrag{v1}[l]{\small $\abf_{RX}(\tilde{\theta}^r_{1})$} %
        \psfrag{v2}[l]{\small $\cdots$} %
        \psfrag{v3}[c]{\small $\abf_{RX}(\tilde{\theta}^r_{N_r})$} %
        \psfrag{u1}[c]{\small $\abf_{TX}(\tilde{\theta}^t_{1})^H$} %
        \psfrag{u2}[c]{\small $\vdots$} %
        \psfrag{u3}[c]{\small } %
        \psfrag{uNt}[c]{\small $\abf_{TX}(\tilde{\theta}^t_{N_t})^H$} %
        \psfrag{H=}[c]{ $\Hbf_n=$} %
    \centerline{ \scalefig{0.82} \epsfbox{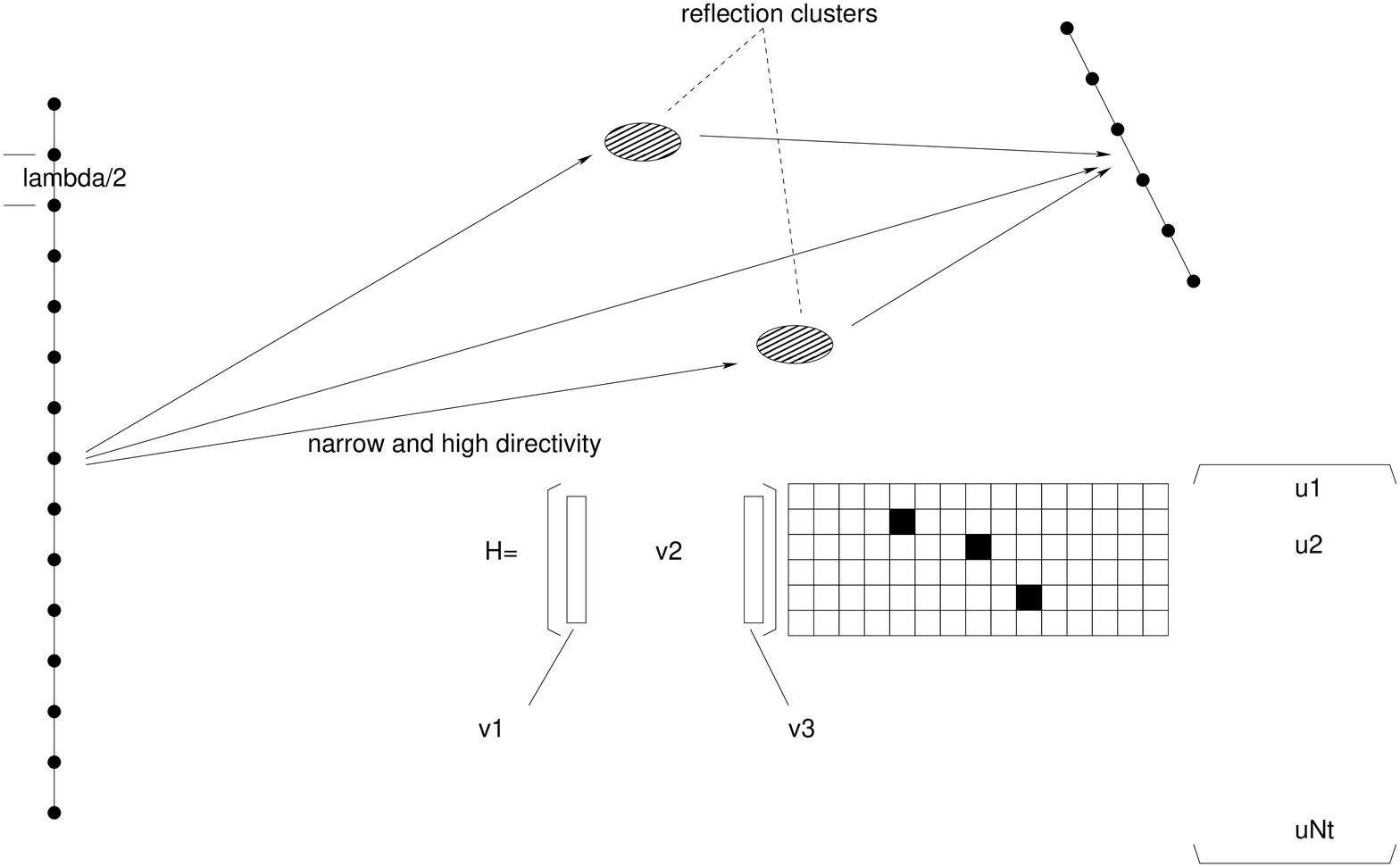} }
    \caption{The considered  mmWave channel model}
    \label{fig:channel_model}
\end{psfrags}
\end{figure}
Therefore, for large $N_t$ and $N_r$, it is generally difficult to
find the locations of the $L$ non-zero elements in
$\widetilde{\Hbf}_{n}$ without an exhaustive search.

For the sake of simplicity, with the cellular downlink in mind, we
assume that $N_r \ll N_t$ and the receiver has  $N_r$ monolithic
microwave integrated circuit  (MMIC)  RF chains so that the
receiver can implement the filter bank $\Abf_R^H$. Then, the
receiver-filtered signal at the filter bank output is given by
\begin{equation}\label{eq:filterbank_received}
\ybf_n' := \Abf_R^H\ybf_n= \widetilde{\Hbf}_{n}\Abf_T^H\xbf_n + \nbf_n',
\end{equation}
where $\nbf_n' = \Abf_R^H\nbf_n$.

Now, since the receiver checks all the possible (quantized) ray
directions, the remaining problem is to design the transmit
training beam sequence $\{ \xbf_n| n \in T_\Pc\}$ for estimating
the sparse channel, where $T_\Pc$ is the set of symbol times
allocated to training signal transmission.

\subsection{The Considered Dynamic Channel Model}

To design an efficient training beam sequence for estimating the
sparse channel presented in Section
\ref{subsec:sparsChannelModel}, we exploit the channel dynamic and
model the channel dynamic by imposing a block Markovian structure
on the VCM $\widetilde{\Hbf}_n$. That is, $\widetilde{\Hbf}_n$ is
constant over one slot consisting of $M_s$ symbols. Let us denote
the VCM at  slot $k$ by $\widetilde{\Hbf}_{(k)}$. The VCM
$\widetilde{\Hbf}_{(k)}$ at slot $k$
 changes to $\widetilde{\Hbf}_{(k+1)}$ at  slot $k+1$ in a
Markovian manner. Here, the conventional block Gauss-Markov
process or state-space channel model widely used in time-varying
channel estimation \cite{Choi&Love&Bidigare:14JSTSP, Noh&Zol&Sung&Love:14JSTSP} is
inappropriate to model the sparse mmWave MIMO channel, and the sparsity
of the mmWave MIMO channel should be captured in the Markov model.
 Focusing on the dynamic of the locations of the non-zero elements of the VCM rather than the values\footnote{The value will be obtained with reasonable quality once the
correct direction is hit by the pilot beam with high power.} and
considering that the non-zero elements in $\widetilde{\Hbf}_{(k)}$
are associated with the line-of-sight (LOS) and reflection
clusters,  we assume the following model:

\vspace{0.5em}

\begin{assumption} \label{ass:lpaths_indep}
Each of the $L$ paths (or the locations of the non-zero elements)
in $\widetilde{\Hbf}_{(k)}$ moves from the current column location
to another column location in $\widetilde{\Hbf}_{(k+1)}$ in a
Markovian manner with a transition probability, and the transition
probability does not change over the considered period of time of
total $T$ slots. Furthermore, the transition of each path is
independent.
\end{assumption}

\vspace{0.5em}

Here, we do not consider the row-wise transition of the non-zero
bins  in the VCM because we assume that the receiver has a filter
bank that checks all AoA directions in parallel at each symbol
time. The rationale for the above assumption is that each
propagation path is generated by either the LOS or a reflection
cluster and the physical movement of the LOS or a reflection
cluster can be modelled as a random walk in space. This random
walk translates into each nonzero bin's random walk in the VCM.
Thus, in the proposed model, the column location of the $l$-th
path follows a random walk on $\{1,2,\cdots,N_t\}$ and the path
gain of the $l$-th path is given by the sequence
$\{\alpha_{(1),l},\alpha_{(2),l},\cdots,$
$\alpha_{(k),l},\cdots\}$, where $\alpha_{(k),l}$ is the complex
gain of the $l$-th path at slot $k$.

\subsubsection{The Case of a Single Path}
\label{sec:singlepathcase}

First, consider the single path case, i.e., $L =1$. In this case,
the number $N$ of states  is $N_t$ because we have $N_t$ columns
in the VCM.  Thus,  the set $\mathcal{S}$ of all possible states
is
 given by
\[
\Sc = \{1,2,\cdots, N_t\},
\]
 where  state $i$  denotes
 the state that the path is located in the $i$-th column of the VCM.
With the set $\Sc$ of states defined, the $(i,j)$-th element of
the $N \times N$ state transition probability matrix $\Pbf$ is
defined as
 \begin{equation}\label{eq:transitionprobability}
    p_{ij} = \text{Pr}\{ S_{k+1} = j | S_k = i\}, ~~ i,j \in
    \mathcal{S},
\end{equation}
where $S_k$ and $S_{k+1}$ denote the states of slots $k$ and
$k+1$, respectively.
 The transition probability matrix $\Pbf$ captures the
characteristics of the movement behavior of the path and hence it should
 be designed carefully by considering the physics of the receiver
and reflection cluster movement. For example, vehicular channels or
pedestrian channels with certain speeds will have different the values in $\Pbf$.
Intuitively, it is reasonable to design
$\Pbf$ so that  the transition probability from the $i$-th column  to
the $j$-th column  monotonically decreases as $|i-j|$ increases. That is, it is more likely to shift to a nearby column with continuous movement.
 Ignoring the possibility of the path's movement
with a large AoD change per slot, we can model the transition
probability matrix as a banded matrix. In this case, for example,
we can consider the transition probability matrix with exponential
decay  given by
 \begin{equation}\label{eq:bandedstructure}
    \Pbf_{\beta}^{B} = \left[\begin{array}{cccccccccccc}
                           & &   & \ddots &  &  &  &  & &  &\\
                            0 & \alpha \beta^B  & \cdots  & \alpha \beta & \alpha & \alpha \beta & \cdots & \alpha\beta^B &0 & \cdots &\\
                          \cdots& 0 & \alpha \beta^B  & \cdots  & \alpha \beta & \alpha & \alpha \beta & \cdots & \alpha\beta^B &0 & \cdots\\
                         & \cdots& 0 & \alpha \beta^B  & \cdots  & \alpha \beta & \alpha & \alpha \beta & \cdots & \alpha\beta^B &0 \\
                          & &   &  &  &  &  & \ddots &  & &\\
                              \end{array}
    \right],
 \end{equation}
 where $B$ is the bandwidth of the matrix and $\beta \in [0,1)$ is the exponential decaying factor.
  (The value of $\alpha$ and the corner of  $\Pbf_{\beta}^{B}$ should be obtained properly so that the sum of each row  is one.) When we want to model the random path appearance from an arbitrary direction, we can use
\begin{equation}\label{eq:bandedstructure_random}
    \Pbf_{\beta,\lambda}^{B} = (1-\lambda)  \Pbf_{\beta}^{B} + \lambda \frac{1}{N_t}  {\mathbf{1}}_{N_t}
\end{equation}
where ${\mathbf{1}}_{N_t}$ is the $N_t \times N_t$ matrix whose entries are all
one and $\lambda \in [0,1]$. Note that
the proposed model can capture static channels by setting $\Pbf=\Ibf$.  Fig. \ref{fig:markov chain}
shows the Markov chain in the  single path case.

  \begin{figure}[t]
\begin{psfrags}
        \psfrag{cdot}[c]{\small $\cdots $} %
        \psfrag{a1}[c]{\small $1$}
        \psfrag{a2}[c]{\small $2$}
        \psfrag{a3}[c]{\small $N_t$}
        \psfrag{p11}[c]{\small $p_{11}$}
        \psfrag{p12}[c]{\small $p_{12}$}
        \psfrag{p22}[c]{\small $p_{22}$}
        \psfrag{p21}[c]{\small $p_{21}$}
        \psfrag{p13}[c]{\small $p_{1 N_t}$}
        \psfrag{p23}[c]{\small $p_{2 N_t}$}
        \psfrag{p32}[c]{\small $p_{N_t 2}$}
        \psfrag{p31}[c]{\small $p_{N_t 1}$}
        \psfrag{p33}[c]{\small $p_{N_t N_t}$}
    \centerline{ \scalefig{0.6} \epsfbox{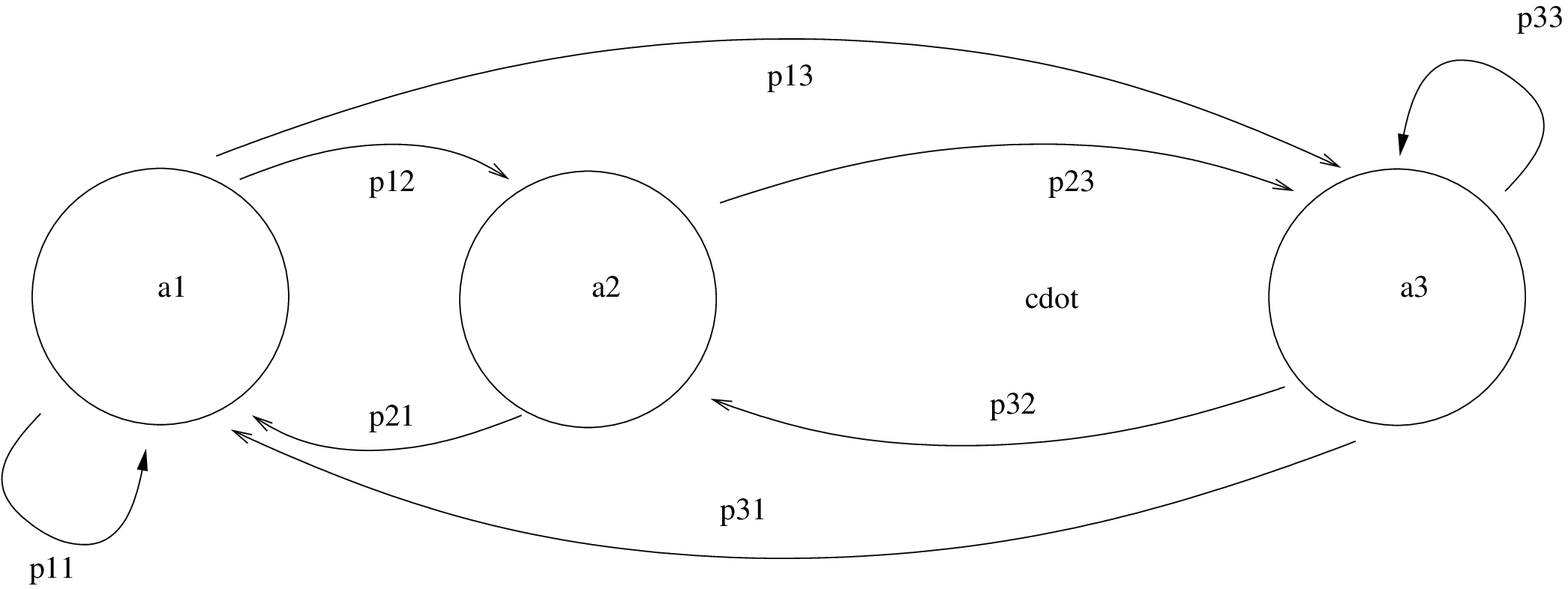} }
    \caption{The Markov chain model for the channel dynamic: The single path case}
    \label{fig:markov chain}
\end{psfrags}
\end{figure}

\subsubsection{The Case of Multiple Paths}
\label{sec:multipathscase}

Now consider the multi-path case, i.e., $L \ge 2$. We allow
multiple paths to merge on and diverge from a column of the VCM.
In this case, the set $\Sc$ of all possible states is given by
\begin{equation}
\Sc=\{(i_1,i_2,\cdots,i_L), ~i_1,i_2,\cdots,i_L =1, 2,\cdots,N_t\},
\end{equation}
where state $(i_1,\cdots,i_L)$ denotes that the $l$-th path is
located at the $i_l$-th column of the VCM for $l=1,\cdots,
L$, and the cardinality $N$ of $\Sc$ is $N_t^L$. Under the
assumption of independent path movement  the state transition
probability in the $L$-path case is given by
 \begin{align}
   & \text{Pr}\{ S_{k+1} = (j_1,\cdots,j_L) | S_k = (i_1,\cdots,i_L)\} \nonumber \\
   &~~= p_{i_1j_1}p_{i_2j_2} \times \cdots \times p_{i_L j_L}, \label{eq:transitionprobabilityL}
\end{align}
where $p_{ij}$ denotes the transition probability that a path
moves from the $i$-th column to the $j$-th column of the VCM at
the next slot, defined in \eqref{eq:transitionprobability}.
 States $(i_1,\cdots,i_L)$, $i_1,\cdots,i_L=1,\cdots,N_t$, can be enumerated as states $\sbf^{(i)}$, $i=1,2,\cdots, N=N_t^L$. Thus, $\Sc$ can also be expressed for notational simplicity as
\begin{equation} \label{eq:stateRepEnum}
\Sc=\{\sbf^{(1)},\sbf^{(2)},\cdots,\sbf^{(N)}\}.
\end{equation}

\begin{figure}[t]
\begin{psfrags}
        \psfrag{mp5}[c]{\small $-0.5$} %
        \psfrag{p5}[c]{\small $0.5$} %
        \psfrag{al1}[c]{\small $\alpha_{1,1}$} %
        \psfrag{al2}[c]{\small $\alpha_{1,2}$} %
        \psfrag{al3}[c]{\small $\alpha_{1,3}$} %
        \psfrag{aoa}[c]{\small $\{\tilde{\theta}^r_{m}\}$ (AoA)} %
        \psfrag{aod}[c]{\small $\{\tilde{\theta}^t_{n}\}$ (AoD)} %
        \psfrag{alk1}[c]{\small $\alpha_{2,1}$} %
        \psfrag{alk2}[c]{\small $\alpha_{2,2}$} %
        \psfrag{alk3}[c]{\small $\alpha_{2,3}$} %
        \psfrag{t1}[c]{\small  slot $k$} %
        \psfrag{t2}[c]{\small slot $k+1$} %
    \centerline{ \scalefig{0.9} \epsfbox{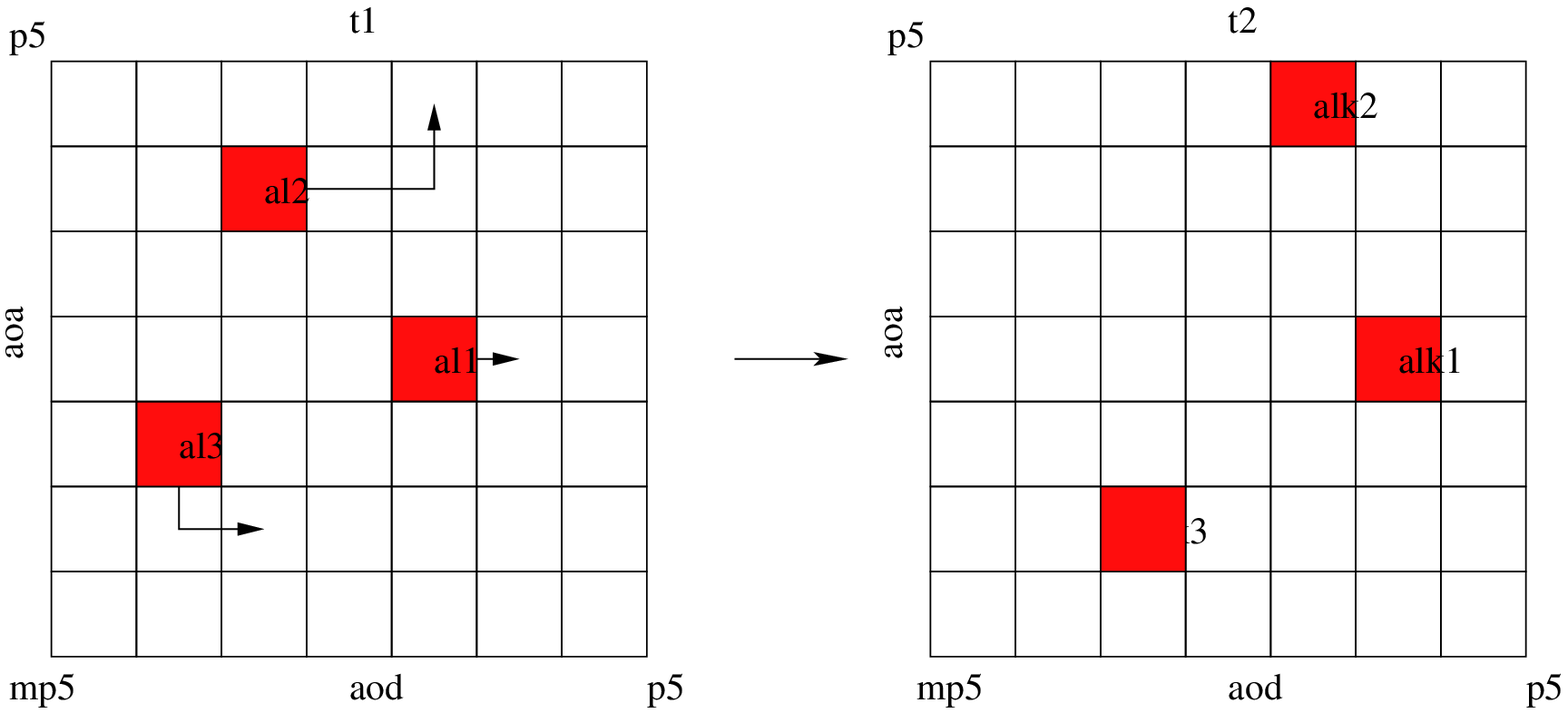} }
    \caption{An illustration of the transition of each path ($L=3$
    and $N_t=N_r=7$)}
    \label{fig:aod_bins}
\end{psfrags}
\end{figure}

Fig. \ref{fig:aod_bins} shows an example of the transition of each
path when $L=3$ and $N_t=N_r=7$. In this example, the AoD
directions of the $1$st, $2$nd, and $3$rd path move from
$\tilde{\theta}^t_{5}$ to $\tilde{\theta}^t_{6}$, from
$\tilde{\theta}^t_{3}$ to $\tilde{\theta}^t_{5}$, and from
$\tilde{\theta}^t_{2}$ to $\tilde{\theta}^t_{3}$, respectively.
 The probability of this  transition  is
$p_{56} \times p_{35} \times p_{23}$ under the assumption of
independent movement of each path.

\subsection{Channel Sensing with Pilot Beams}

We here explain the transmission structure. We consider a typical
time-duplexed training structure, where one slot of size $M_s$
slots consists of a block of $M_p$ $(L \le M_p\ll N_t)$ pilot
symbol times and the remaining symbol times of the slot are used
for data transmission \cite{Tong&Sadler&Dong:04SPM}. We assume
that the transmitter  picks one column of $\Abf_T$ as the pilot
beam at each pilot symbol time and search one column of
$\widetilde{\Hbf}_{(k)}$ at each pilot symbol time, i.e., $\xbf_n
\in \{\sqrt{P_t}\abf_{TX}(\tilde{\theta}^t_{1})$, $\cdots$,
$\sqrt{P_t}\abf_{TX}(\tilde{\theta}^t_{N_t}) \}$ for the training
time.
  Hence, $M_p$ columns of $\Abf_T$ are selected as the $M_p$ pilot
beams in a slot.  The reason for this assumption is that we
assume that the pathloss is severely large in the mmWave band and
thus the pilot beam as well as the data-transmitting beam should
be highly directional to compensate for the large pathloss and obtain a
channel gain estimate of reasonable quality, unless $P_t$ is
extremely high.  (The relaxation of this assumption will be
discussed in Section \ref{sec:extensions}.)
When $\sqrt{P_t}\abf_{TX}(\tilde{\theta}^t_{i_m})$ is the pilot beam at the $m$-th symbol time of
slot $k$, from
\eqref{eq:filterbank_received}, the receiver filter-bank output is given by
\begin{equation}\label{eq:receivedsig}
\ybf_{(k)}'[m] = \sqrt{P_t}\widetilde{\Hbf}_{(k)}(:,i_m) + \nbf_{(k)}'[m],
\end{equation}
where $\ybf_{(k)}'[m]$ and $\nbf_{(k)}'[m]$ denote the signal and
the noise at the receiver filter-bank output at the $m$-th pilot
symbol time of slot $k$, and $\widetilde{\Hbf}_{(k)}(:,i_m)$ is
the $i_m$-th column of $\widetilde{\Hbf}_{(k)}$.

During the training period, the receiver senses and estimates the
$M_p$ columns of the VCM corresponding to the $M_p$ pilot beams.
Then, the receiver feedbacks the sensing results and estimated
channel gains to the transmitter. Finally, the transmitter sends
data and adapts the pilot beams for the next slot based on the
information from the receiver. At the next slot, the process is
repeated.  The whole process of training and data transmission is
depicted in Fig. \ref{fig:pilots_model}.
\begin{figure}[t]
\begin{psfrags}
        \psfrag{mp}[c]{\small $M_p $} %
        \psfrag{st}[c]{\small State transition}
        \psfrag{k}[c]{\small  slot $k$}
        \psfrag{k1}[c]{\small slot $k+1$}
        \psfrag{cd}[c]{\small $\cdots$}
        \psfrag{cs}[c]{\small Channel sensing}
        \psfrag{ce}[c]{\small and estimation}
        \psfrag{rx}[c]{\small RX}
        \psfrag{tx}[c]{\small TX}
        \psfrag{hb}[c]{\small $\Hbf_k$}
        \psfrag{fb}[c]{\small Feedback information}
        \psfrag{dp}[c]{\small Data transmission}
        \psfrag{td}[c]{\small A sequence of pilot beams
        for $M_p$ time slots}
        \psfrag{a1}[c]{\small $\abf_{TX}(\tilde{\theta}^t_{i_1})$}
        \psfrag{a2}[c]{\small $\abf_{TX}(\tilde{\theta}^t_{i_2})$}
        \psfrag{a4}[c]{\small $\abf_{TX}(\tilde{\theta}^t_{i_{M_p}})$}
    \centerline{ \scalefig{0.9} \epsfbox{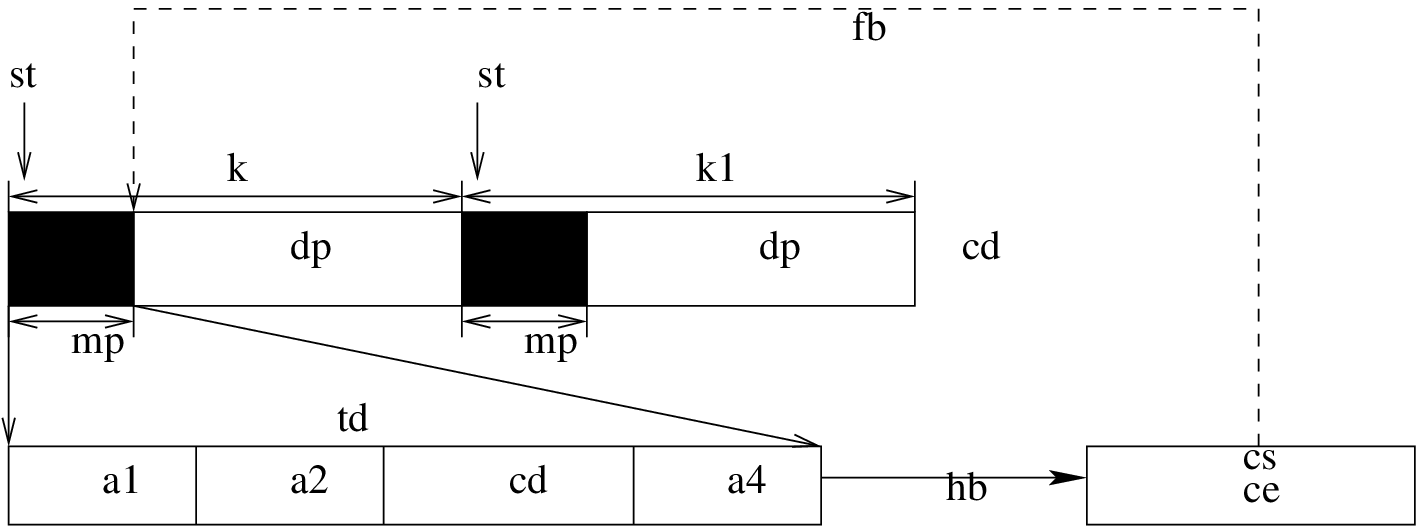} }
    \caption{The overall process of training and data transmission}
    \label{fig:pilots_model}
\end{psfrags}
\end{figure}

Now, the problem is how to optimally design the sequence of pilot beams for
$M_p$ symbol times for each slot.
 Since $M_p \ll N_t$, we can only sense a few columns of  $\widetilde{\Hbf}_{(k)}$ at slot $k$.
  Hence, $M_p$ pilot beams for each slot should be designed
  judiciously by exploiting the channel dynamic and the available
  information in all the previous slots. In the next section, we propose an analytical framework
for optimal pilot beam sequence design  based on  POMDP theory.

\section{POMDP Formulation for Training Beam Sequence Design}
\label{sec:POMDPformulation}

In this section, we formulate a relevant POMDP problem for the
training beam sequence design problem for estimation of large
sparse mmWave MIMO channels.

\subsection{The Action Space and the Observation Space}
\label{sec:actionandobservation}

In the previous section, we assumed that $M_p$ columns of $\Abf_T$
are selected as the pilot beam sequence for the $M_p$ pilot symbol
times for each slot. This is equivalent to selecting $M_p$ columns
of the VCM to be sensed by the training beam sequence at each
slot. Let us denote the selected column indices of the VCM by
\[
\abf = [a_1, a_2, \cdots, a_{M_p}],
\]
 where $a_m$ is the index of the column of the VCM
that is sensed at the $m$-th pilot symbol time. The vector $\abf$ represents the action that we perform at each slot and is referred to as the action vector. Note that there exist ${N_t}\choose{M_p}$ possible $\abf$'s
and the optimal training beam sequence design problem reduces to choosing the
best $\abf$ for each slot under the considered optimality criterion.

After the training period of each slot is finished, the receiver
returns feedback information to the transmitter. The feedback
information contains the detection result about the existence of
paths in the selected columns of the VCM  and  the complex gains
of the detected paths. Then, the transmitter uses the channel gain
information of the detected paths for transmit beamforming during
the data transmission period and uses the feedback information
about the existence of paths to select the training beam indices
for the next slot in an adaptive manner. The second feedback
information can be modeled as
\begin{equation}
    \obf = [o_1, o_2, \cdots, o_{M_p}] \in \{0,1\}^{M_p},
\end{equation}
where $o_m = 1$ indicates that a path is detected by
the training beam transmitted at the $m$-th pilot symbol time, and otherwise $o_m =0$.
Since we have  $2^{M_p}$ possible vectors for $\obf$, the observation
space is given by
\[
{\mathcal{O}}= [\obf^{(1)}, \obf^{(2)}, \cdots, \obf^{(2^{M_p})}].
\]

When the state of the VCM is $\sbf^{(i)}$ and the action vector
$\abf$ is used for the pilot beam sequence for the slot, the
probability that the transmitter observes the feedback information
$\obf^{(j)}$ is denoted as $q_{ij}^\abf$, i.e.,
\begin{align}
     &q_{ij}^{\abf} \triangleq \text{Pr}\{ \obf = \obf^{(j)} | \sbf^{(i)}, \abf\} ~~~\text{for}~ \sbf^{(i)} \in {\mathcal{S}}, \obf^{(j)} \in \mathcal{O}. \label{eq:Pr_feedbackInf}
\end{align}
This probability depends on the detector used to identify the existence of a path at the receiver, and will be discussed next.

\subsubsection{The Channel Sensor at the Receiver}

When the action vector $\abf=[a_1,\cdots,a_{M_p}]$ is used for the current slot,
the receiver filter-bank output at the $m$-th pilot symbol time of the current slot is given from $\eqref{eq:receivedsig}$ by
\begin{equation}  \label{eq:receivedsig11}
\ybf_{(k)}'[m] = \sqrt{P_t}\widetilde{\Hbf}_{(k)}(:,a_m) + \nbf_{(k)}'[m],
\end{equation}
where $\nbf_{(k)}'[m] \sim {\mathcal{CN}}(0,\sigma_N^2\Ibf)$ since
$\Abf_{R}$ is a unitary matrix. Based on $\ybf_{(k)}'[m]$ the
receiver tests the existence of a path at each AoA by checking
each element of the $N_r \times 1$ vector $\ybf_{(k)}'[m]$. Then,
the detection problem for each element of $\ybf_{(k)}'[m]$ is
given by
\begin{equation} \label{eq:pathDetectionProb}
\left\{
\begin{array}{ll}
  {\mathcal{H}}_0: & p(\ybf_{(k)}'[m](n)|\text{empty}) \sim {\mathcal{CN}}(0,\sigma_N^2),\\
  {\mathcal{H}}_1: & p(\ybf_{(k)}'[m](n)|\text{non-empty}) \sim {\mathcal{CN}}(0, P_tN_tN_r\xi^2+\sigma_N^2),
\end{array}
\right.
\end{equation}
where $\ybf_{(k)}'[m](n) =y_R(n) + \iota y_I(n)$ is the $n$-th
element of $\ybf_{(k)}'[m]$, $n=1,2,\cdots,N_r$. We assume that a
Neyman-Pearson detector with size (i.e., false alarm probability)
$P_{FA}$ \cite{Poor:book,Sung&Tong&Poor:06IT} is adopted to test
\eqref{eq:pathDetectionProb}. Then, the detector is given by
\cite{Poor:book}
\begin{equation}
\delta_{NP} = \left\{
\begin{array}{ll}
1, & ~~~|\ybf_{(k)}'[m](n)|^2 ~\ge~ \tau, \\
0, &  ~~~\mbox{otherwise},
\end{array}
\right.
\end{equation}
where $\tau=\sigma_N^2\Gamma^{-1}(1;1-P_{FA})$ since
$|\ybf_{(k)}'[m](n)|^2=y_R^2(n)+y_I^2(n) \sim
\mbox{gamma}(1,1/\sigma_N^2)$ under $\Hc_0$. Here, $\Gamma^{-1}$
is the inverse function of the incomplete gamma function
$\Gamma(x;t)$. The corresponding miss detection probability is
given by \cite{Poor:book}
\begin{equation}  \label{eq:NPPmiss}
P_{MD} = \Gamma \left[ 1; \frac{1}{1+ P_tN_tN_r\xi^2/\sigma_N^2}\Gamma^{-1}(1;1-P_{FA})  \right],
\end{equation}
where $P_tN_tN_r\xi^2/\sigma_N^2$ is the path SNR incorporating the transmit power, path gain, transmit beamforming and receive beamforming.

Now consider $q_{ij}^{\abf}$ in \eqref{eq:Pr_feedbackInf}.
\begin{eqnarray}
q_{ij}^{\abf} &=& \text{Pr}\{ \obf = \obf^{(j)} | \sbf^{(i)}, \abf\}, \nonumber \\
&=& \text{Pr}\{ o_m = o^{(j)}_m, m=1,\cdots,M_p | \sbf^{(i)}, \abf\}, \nonumber\\
&\stackrel{(a)}{=}& \text{Pr}\{ o_1 = o^{(j)}_1 | \sbf^{(i)}, \abf\} \times \cdots \times \text{Pr}\{ o_{M_p} = o^{(j)}_{M_p} | \sbf^{(i)}, \abf\},  \label{eq:qijabfprodindep}\\
&=& \text{Pr}\{ o_1 = o^{(j)}_1 | \sbf^{(i)}, a_1\} \times \cdots \times \text{Pr}\{ o_{M_p} = o^{(j)}_{M_p} | \sbf^{(i)}, a_{M_p}\},
\end{eqnarray}
where step (a) follows since only the noise randomness remains once the state is given, and the receiver noise is assumed to be independent over the $M_p$ pilot symbol times.
$\text{Pr}\{ o_m = o^{(j)}_m | \sbf^{(i)}, a_m\}$ is computed as follows.
\begin{eqnarray}
\text{Pr}\{ o_m = 0 | \sbf^{(i)}, a_m\} &=&  \text{Pr}\{ \delta_{NP}=0 | {\mathcal{H}}_0\}^{N_r - N^{BIN}_{\sbf^{(i)}, a_m}} \text{Pr}\{ \delta_{NP} = 0 | {\mathcal{H}}_1\}^{N^{BIN}_{\sbf^{(i)}, a_m}}  \nonumber \\
&=& (1-P_{FA})^{N_r - N^{BIN}_{\sbf^{(i)}, a_m}} P_{MD}^{N^{BIN}_{\sbf^{(i)}, a_m}},  \label{eq:observationOm0}
\end{eqnarray}
and
\begin{equation}
   \text{Pr}\{ o_m = 1 | \sbf^{(i)}, a_m\} = 1 - \text{Pr}\{ o_m = 0 | \sbf^{(i)}, a_m\},
\end{equation}
where $(P_{FA},P_{MD})$ for the Neyman-Pearson detector is given
in \eqref{eq:NPPmiss}, and $N^{BIN}_{\sbf^{(i)}, a_m}$ is the {\em
actual} number of non-zero bins in the $a_m$-th column of the VCM
when the VCM is in state $\sbf^{(i)}$. Here, two paths with the
same AoD and different AoAs are regarded as two different paths.
\eqref{eq:observationOm0} is obtained because for the AoA
directions with no paths  in the $a_m$-th column of the VCM the
detector should declare $\Hc_0$ and for the AoA directions with
paths in the $a_m$-th column of the VCM the detector should miss
to have $o_m = 0$ (i.e., no non-zero bin in the $a_m$-th column is
declared). The path complex gain can be estimated based on
\eqref{eq:receivedsig11} with a certain estimator such as the
maximum likelihood or minimum mean-square-error (MMSE) estimator.

\begin{figure}[t]
\begin{psfrags}
        \psfrag{block1}[c]{\small $\text{slot}~k$} %
        \psfrag{a1}[c]{\small $\Pbf$}
        \psfrag{a2}[c]{\small $\abf$}
        \psfrag{a3}[c]{\small $\obf$}
        \psfrag{a4}[c]{\small $r_k(\sbf,\abf,\obf)$}
        \psfrag{bleif}[c]{\small $\pibf_k$}
        \psfrag{next}[c]{\small $\pibf_{k+1}$}
        \psfrag{b1}[c]{\small $\text{State}$}
        \psfrag{b2}[c]{\small $\text{transition}$}
        \psfrag{c1}[c]{\small $\text{Pilot beam}$}
        \psfrag{c2}[c]{\small $\text{selection \&}$}
        \psfrag{c3}[c]{\small $\text{transmission}$}
        \psfrag{d1}[c]{\small $\text{Feedback}$}
        \psfrag{d2}[c]{\small $\text{observation}$}
        \psfrag{e1}[c]{\small $\text{}$}
        \psfrag{e2}[c]{\small $\text{Reward}$}
    \centerline{ \scalefig{0.6} \epsfbox{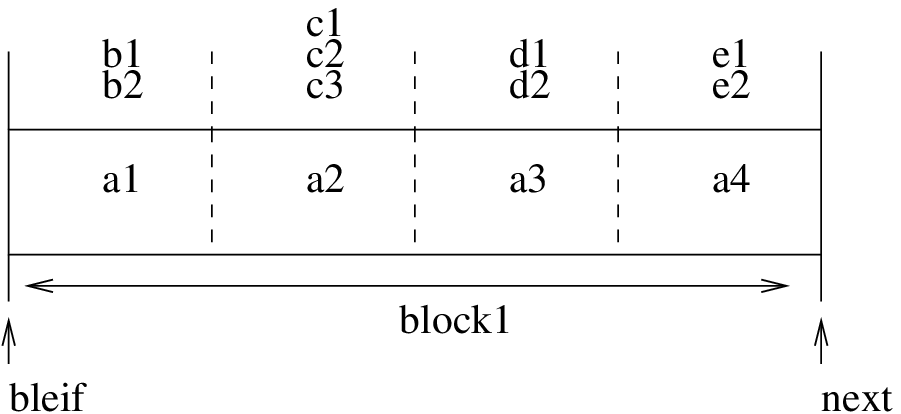} }
    \caption{The sequence of operation at block $k$}
    \label{fig:Thesequenceofoperation}
\end{psfrags}
\end{figure}

\subsection{Sufficient Statistic}
\label{sec:sufficientstatistic}

Fig. \ref{fig:Thesequenceofoperation} depicts the sequence of the
operation at slot $k$ composed of the current belief vector
$\pibf_k$, state transition, action, observation, and following
reward. In the beginning of slot $k$,  the information from all
the past slots is summarized as a belief
vector:\footnote{Following the convention, we define the belief
vector $\pibf_k$ prior to the state transition for each slot, as
shown in Fig. \ref{fig:Thesequenceofoperation}. The belief vector
after the state transition can be updated easily based on the
state transition probability matrix.}
\begin{equation}
    \pibf_k = [\pi_{k,1}, \pi_{k,2}, \cdots, \pi_{k,N}],
\end{equation}
where  $\pi_{k,i}$
is the probability that the state at the beginning of slot $k$ is state $\sbf^{(i)}$ conditioned on all past pilot beam sequences and feedback
information.  It is known
that the belief vector is a sufficient statistic for the action, i.e., the design of the optimal
pilot beam sequence for slot $k$ in our case \cite{Smallwood&Sondic:73OR}.
The transmitter uses the belief vector
to optimally choose the action for slot $k$ from the action space that maximizes the expected reward,
and updates the belief vector for the next block based on the new feedback information at the current slot.

\subsection{The Reward and The Policy}
\label{sec:therewardandpolicy}

Depending on the result of channel identification, a reward is
gained during the data transmission period. According to the
system objective, several
 reward definitions can be considered.
In our case, since the purpose of channel estimation is data
transmission, we consider the number of successfully transmitted
bits over the total considered period of $T$ slots as the final
reward. With the assumption of $\alpha_{(k),l}
\stackrel{i.i.d.}{\sim} \Cc\Nc(0,\xi^2)$ in
\eqref{eq:physical_channelmodel}, the data rate for slot $k$ will
roughly be $\log (1 + N_p N_tN_r\xi^2/\sigma_N^2)$ in case of
maximal ratio combining (MRC) transmission and reception or $N_p
\log (1 + N_tN_r\xi^2/\sigma_N^2)$ in case of spatial multiplexing
used at the transmitter, where $N_p$ is the number of correctly
identified paths. Here, under the assumption of spatial
multiplexing on different paths, the number of successfully
transmitted bits per slot is proportional to $N_p$. Thus, we
choose the number of correctly identified paths as the slot
reward.

If the state of the VCM at slot $k$ is $\sbf^{(i)}$,  the selected
pilot beam sequence at slot $k$ is $\abf$, and the transmitter
observes the feedback information $\obf^{(j)}$, then the immediate
reward\footnote{In case of MRC, we use
$r(\sbf^{(i)},\abf,\obf^{(j)})=\log (1 + [\sum_{m=1}^{M_p}
N^{BIN}_{\sbf^{(i)}, a_m} o_{m}^{(j)}] N_tN_r\xi^2/\sigma_N^2)$
instead.} at slot $k$ can be expressed as
\begin{equation}
    r(\sbf^{(i)},\abf,\obf^{(j)}) = \sum_{m=1}^{M_p} N^{BIN}_{\sbf^{(i)}, a_m}  o_{m}^{(j)}
\end{equation}
where $o_m^{(j)}$ is the $m$-th element of $\obf^{(j)}$. Note that
the false alarm of the receiver detector does not affect the
immediate reward because in this case we have $o_{m}^{(j)}=1$ but
$N^{BIN}_{\sbf^{(i)}, a_m}=0$. (However, there will be no ACK for
the transmitted packet over the slot and the communication
resource will be wasted. The false alarm probability $P_{FA}$ of the
channel sensor should be determined properly by considering this
resource waste.) In the case of miss detection, the opportunity of
successful data transmission is lost. Thus, by reducing the miss
probability $P_{MD}$ of the channel sensor we can obtain more reward.
It is known that the Neyman-Pearson detector minimizes $P_{MD}$ for
given $P_{FA}$ \cite{Poor:book}.

Note that the state at slot $k$ and the feedback information are
unknown at the time of action. Hence, we should consider the
expected reward \cite{PutermanBook}. If the VCM state prior to the
state transition at slot $k$ is $\sbf^{(n)}$ and we perform action
$\abf$, then the immediate expected reward at slot $k$ is given by
\begin{eqnarray}
    R(\sbf^{(n)},\abf) &=&   \sum_{i=1}^N p_{ni}\sum_{j=1}^{2^{M_p}}\text{Pr}\{ \obf = \obf^{(j)} |  \sbf^{(i)}, \abf\}  r(\sbf^{(i)},\abf,\obf^{(j)}) \nonumber \\
            &=&  \sum_{i=1}^N p_{ni}\sum_{j=1}^{2^{M_p}} q_{ij}^\abf \sum_{m=1}^{M_p}N^{BIN}_{\sbf^{(i)}, a_m} o_{m}^{(j)},
\end{eqnarray}
where the state transition from $\sbf^{(n)}$ to all possible $\sbf^{(i)}$ within the slot is captured by $\sum_{i=1}^N p_{ni} (\cdot)$.
 If the belief vector $\pibf_k$ is given at the beginning of slot $k$ and $S_k$ is the random variable representing the state at the beginning of the slot, then the (average) immediate expected reward for action $\abf$ at slot $k$ can be expressed as
\begin{eqnarray}
  \nonumber  {\mathcal{R}}(\pibf_k , \abf ) &=& {\mathbb{E}}\{R(S_k, \abf) | \pibf_k\} \\
                          &=& \sum_{i=1}^N \pi_{k,i}R(\sbf^{(i)},\abf) = \langle\Rbf(\abf),
                          \pibf_k\rangle,
                          \label{eq:immediateExpRew1}
\end{eqnarray}
where  $\Rbf(\abf) := [ R(\sbf^{(1)},\abf),  R(\sbf^{(2)},\abf), \cdots,  R(\sbf^{(N)},\abf)]$,
and $\langle \cdot,\cdot \rangle$ denotes the inner product operation.

In the POMDP framework, a policy $\delta$ is defined as a sequence of functions that
 maps the belief vector to an action for each slot \cite{Smallwood&Sondic:73OR,Monahan:82MS}.  The optimal policy is one that
maximizes the total immediate expected reward accumulated over the
considered total time of $T$ slots\footnote{Such a formulation is
called a finite-horizon POMDP. The considered formulation can be
modified to the infinite-horizon case.} when the initial belief
vector  at the beginning of the transmission is given as
$\pibf_1$. That is, the optimal policy  $\delta^*$ is expressed as
\begin{equation}\label{eq:optimalreward}
    \delta^{*} = \mathop{\arg\max}_{\delta} {\mathbb{E}}_{\delta}\left[\sum_{k=1}^T R(S_k,\abf_k)|\pibf_1\right],
\end{equation}
where $\abf_k$ is the action vector for slot $k$, and ${\mathbb{E}}_\delta$ is the conditional expectation when the policy $\delta$ is given.

\section{Optimal and Suboptimal Strategies for Training Beam Design}
\label{sec:strategy}

\subsection{The Optimal Strategy}
\label{sec:optimalstrategy}

Under the proposed POMDP formulation, the optimal training beam sequence design for maximizing the accumulated data rate over $T$ slots  is equivalent to the problem of finding
the optimal policy satisfying \eqref{eq:optimalreward}.
To solve this problem  we  define the {\it
optimal  value function} $V(\pibf_1)$ as the maximum total expected reward obtained
by the optimal policy $\delta^*$ when the initial belief vector $\pibf_1$ is given at $k=1$:
\begin{equation}
    V(\pibf_1) = {\mathbb{E}}_{\delta^*}\left[ \sum_{k=1}^T R(S_k,\abf_k)|\pibf_1 \right].
\end{equation}
Next consider $V^{k}(\pibf_k)$ defined as the maximum remaining expected reward that can be obtained from
slot $k$ to slot $T$ when a belief vector $\pibf_k$ is given at the beginning of slot $k$. By separating slot $k$ and the remaining slots,  $V^{k}(\pibf_k)$  can be decomposed as  \cite{Ross70applied}
\begin{equation}\label{eq:optimalvaluefunction}
V^k(\pibf_k)   = \max_{\abf_k} \left\{ \langle \Rbf(\abf_k), \pibf_k \rangle  + \sum_{j=1}^{2^{M_p}} V^{k+1}({\mathcal{T}}(\pibf_k | \abf_k,\obf^{(j)})) \gamma(\obf^{(j)}| \pibf_k, \abf_k)  \right\},
\end{equation}
where $\gamma(\obf^{(j)}| \pibf_k, \abf_k) = \sum_{i=1}^N
q_{ij}^{\abf_k} \sum_{n=1}^N \pi_{k,n} p_{ni}$ is the probability
that $\obf^{(j)}$ is observed given belief vector $\pibf_k$ and
action  $\abf_k$ for slot $k$, and ${\mathcal{T}}(\pibf_k |
\abf_k,\obf^{(j)})$ is the updated belief vector from $\pibf_k$ at
slot $k$ for  slot $k+1$ after taking action $\abf_k$ and
observing $\obf^{(j)}$. Here, ${\mathcal{T}}(\pibf_k |
\abf_k,\obf^{(j)})$ can be computed using Bayes's rule as
\cite{Smallwood&Sondic:73OR,Monahan:82MS}
\begin{equation}\label{eq:beliefupdate}
   {\mathcal{T}}(\pibf_k | \abf_k,\obf^{(j)}) =\pibf_{k+1} = [\pi_{k+1, 1}, \pi_{k+1, 2}, \cdots, \pi_{k+1, N}],
\end{equation}
where
\begin{equation}\label{eq:beliefupdate1}
 \pi_{k+1,i} = \text{Pr}\{ \sbf_k = \sbf^{(i)} |\pibf_k, \abf_k, \obf^{(j)}\} = \frac{q_{ij}^{\abf_k} \sum_{n=1}^N \pi_{k,n} p_{ni}}{\sum_{i=1}^N q_{ij}^{\abf_k} \sum_{n=1}^N \pi_{k,n}p_{ni}}, ~~i=1,\cdots,N.
\end{equation}
The first term in the left-hand side (LHS) of \eqref{eq:optimalvaluefunction} is
 the immediate expected reward for slot $k$ and the second term is the the maximum remaining expected reward from slot $k+1$.
As shown in \eqref{eq:optimalvaluefunction}, the selected pilot
beam sequence $\abf_k$ for slot $k$ affects not only the immediate
expected reward at slot $k$ but also the maximum remaining reward
that can be obtained from slot $k+1$. Hence, by considering the
expected reward of the future slots, the performance can be
improved over only  considering the immediate reward at each slot,
i.e., the first  term in the LHS of
\eqref{eq:optimalvaluefunction}.

There exist several known algorithms to obtain the optimal policy
$\delta^*$ over the considered transmission period
$k=[1,2,\cdots,T]$ based on \eqref{eq:optimalvaluefunction} when
the state transition probability $\Pbf$, reward, and observation
and action spaces are given
\cite{Smallwood&Sondic:73OR,Monahan:82MS,Cassandra&Anthony:97UAI}.
For example, point-based POMDP value iteration algorithms are
proposed for  efficiency and low-complexity
\cite{Pineau&Gordon:03IJCAI,Kurniawati&Hsu:08Robtics,Smith&Simmons:12arXiv}.
However, as the number of states and the size of the action space
increase, it requires high computational complexity to obtain the
optimal policy for the POMDP problem even with these point-based
POMDP value iteration algorithms. Thus, to reduce the complexity,
we can alternatively use a suboptimal greedy policy that considers
only the  immediate expected reward at each slot, i.e., the first
term in the LHS of \eqref{eq:optimalvaluefunction}. However, in
large mmWave MIMO channels with large transmit antenna arrays,
even this greedy policy requires high computational complexity due
to the huge numbers  of states and actions. Thus, in the next
subsection we propose a way to implement the greedy policy with
significantly reduced complexity making the proposed POMDP-based
training beam design for large mmWave MIMO channels practical.

\vspace{0.5em}

\begin{remark}
The optimal policy or a suboptimal policy can be computed
off-line once the channel dynamic and other parameters are given,
and the computed policy can be stored beforehand.    Then, in
actual transmission, we start from $k=1$ with $\pibf_1$, and
repeat action and observation until slot $T$. The complexity of
this actual operation is insignificant in general. This is one of
the main advantages of the proposed training beam design approach.
\end{remark}

\subsection{The Proposed Greedy Approach with a Reduced Sufficient Statistic }
\label{sec:reducedbelief}

Although a policy for the POMDP problem can be obtained off-line
as mentioned in the above, it is prohibitive in the case of large mmWave
MIMO channels since the numbers of states and actions grow as
$N_t^L$ and ${N_t}\choose{M_p}$, respectively, in the standard
formulation in the previous sections. For example, when $N_t=64$,
$M_p=10$, and $L=3$, the size of the action space is $1.5\times
10^{11}$ and the number of states is 262,144. The action space
size of $1.5\times 10^{11}$ combined with the state size 262,144
makes solving the POMDP problem prohibitive. Thus, we here focus
on the greedy policy that considers the first term in the LHS of
\eqref{eq:optimalvaluefunction} (i.e., the term in
\eqref{eq:immediateExpRew1}) and its fast solution by introducing
a reduced sufficient statistic that has only $N_t L$ elements.

\begin{proposition}
    With Assumption \ref{ass:lpaths_indep}, let $\omega_{k,l,i}$ be the
    conditional probability (given the pilot sequences and feedback information
    of all past slots) that the $l$-th path is in the $i$-th column of
    the VCM
    at the beginning of slot $k$ for $i=1,\cdots,N_t$. Then,
        \begin{equation}
            \omegabf_k = [\omega_{k,1,1},\omega_{k,1,2}, \cdots, \omega_{k,1,N_t},\omega_{k,2,1},\cdots, \omega_{k,L,N_t}]
        \end{equation}
    is a sufficient statistic to design the optimal pilot sequence
    at slot $k$. Here, $\sum_{li}\omega_{k,l,i}=L$.
\end{proposition}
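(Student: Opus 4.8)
The plan is to prove two things, which together say that $\omegabf_k$ carries all the information the greedy policy needs: (i) the immediate expected reward ${\mathcal{R}}(\pibf_k,\abf)$ in \eqref{eq:immediateExpRew1} is a function of $\omegabf_k$ alone, and (ii) the action--observation update \eqref{eq:beliefupdate1} maps $\omegabf_k$ to $\omegabf_{k+1}$ without reference to the full belief. The conceptual backbone is Assumption \ref{ass:lpaths_indep}: since the $L$ paths move independently, if the belief over $\Sc$ factors into a product of the $L$ per-path column distributions, then $\omegabf_k$ \emph{is} the belief and is trivially sufficient. Everything therefore hinges on showing, by induction on $k$ (with a product-form initial belief $\pibf_1$), that this product form is preserved by the recursion.

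First I would reduce the reward. Performing the inner sum over observations collapses each $o_m$ to its mean, $\sum_{j}q_{ij}^{\abf}o_m^{(j)}=\text{Pr}\{o_m=1\,|\,\sbf^{(i)},a_m\}$, so that, with $\mathbf{S}$ denoting the post-transition state (distributed as $\pibf_k$ pushed through $\Pbf$),
\[
{\mathcal{R}}(\pibf_k,\abf)=\sum_{m=1}^{M_p}\Ebb\big[\phi\big(N^{BIN}_{\mathbf{S},a_m}\big)\big],\qquad \phi(n):=n\Big(1-(1-P_{FA})^{N_r-n}P_{MD}^{\,n}\Big),
\]
using \eqref{eq:observationOm0}. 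The reward is thus additive over the sensed columns $a_1,\dots,a_{M_p}$, and the $m$-th term depends on the belief only through the law of the single-column occupancy $N^{BIN}_{\mathbf{S},a_m}=\sum_{l=1}^{L}\mathbf{1}[\mathbf{S}_l=a_m]$, where $\mathbf{S}_l$ is the column occupied by path $l$. Under independent paths this law is the convolution of $L$ Bernoulli variables and is fixed by the per-path probabilities $\{(\omega_{k,l,\cdot}\Pbf)_{a_m}\}_{l=1}^{L}$; hence ${\mathcal{R}}(\pibf_k,\abf)$ is a function of $\omegabf_k$ and the greedy action is selectable from $\omegabf_k$.

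For the propagation I would split \eqref{eq:beliefupdate1} into its transition and measurement halves. The transition half is immediate: marginalizing the product kernel \eqref{eq:transitionprobabilityL} path by path gives the post-transition marginals $\wt{\omega}_{k,l,\cdot}=\omega_{k,l,\cdot}\Pbf$ with no inter-path coupling. For the measurement half I would expose the factorization buried in \eqref{eq:observationOm0}: writing $\rho:=P_{MD}/(1-P_{FA})$, a \emph{no-detection} outcome has likelihood $\text{Pr}\{o_m=0\,|\,\sbf^{(i)},a_m\}=(1-P_{FA})^{N_r}\prod_{l=1}^{L}\rho^{\mathbf{1}[(\sbf^{(i)})_l=a_m]}$, a genuine product over paths. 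Hence, starting from a product-form belief, every $o_m=0$ reweights each path's column distribution by an independent multiplicative factor ($\rho$ on the sensed column, $1$ elsewhere), keeps the belief in product form, and yields $\omegabf_{k+1}$ path by path via Bayes' rule. That each $\omega_{k+1,l,\cdot}$ remains a distribution gives $\sum_{l,i}\omega_{k+1,l,i}=L$ for free.

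The main obstacle is the \emph{detection} outcome $o_m=1$, whose likelihood $1-(1-P_{FA})^{N_r}\prod_l\rho^{\mathbf{1}[\cdot]}$ is a ``one-minus-product'' that does not factor across paths: ``some path was seen in column $a_m$'' couples the paths negatively, because one path's presence already explains the detection and lowers the others' posterior occupancy. This is precisely where preservation of the product form is delicate, and the whole sufficiency claim rests on it. I would attack it by computing the effect of an $o_m=1$ outcome on a single path's marginal after averaging over the independent priors of the other $L-1$ paths, and showing that this effect is again a per-path reweighting determined only by the others' occupancy probabilities $\{\wt{\omega}_{k,l',a_m}\}$, so that the marginal recursion closes. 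Verifying that this closure is exact---rather than a mean-field approximation that silently discards the inter-path correlation the detection induces---is the crux of the proof; if exactness is unavailable, the defensible reading of the proposition is that $\omegabf_k$ is the sufficient statistic of the product-form belief model that the recursion maintains self-consistently.
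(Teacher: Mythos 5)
Your route is genuinely different from the paper's, and the difference is instructive. The paper's proof is a three-line reconstruction argument: since the full belief $\pibf_k$ is already known to be sufficient, it suffices to express $\pibf_k$ as a function of $\omegabf_k$, and the paper does this by writing $\pi_{k,n}=\text{Pr}\{S_k=(i_1,\cdots,i_L)\,|\,\Ic_k\}=\omega_{k,1,i_1}\omega_{k,2,i_2}\cdots\omega_{k,L,i_L}$, asserting that this factorization of the posterior given the past data $\Ic_k$ follows from Assumption~\ref{ass:lpaths_indep}. You instead attempt a constructive induction: the expected reward is a functional of $\omegabf_k$, and the Bayes update should close on $\omegabf_k$. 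Both arguments stand or fall on exactly the same fact --- that the posterior over the joint state remains a product of per-path marginals after conditioning on the feedback --- but the paper asserts this in one line, whereas you try to verify it and correctly isolate where the verification is hard.

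The crux you flag is a genuine gap, and it is the paper's gap as well, not just yours. Assumption~\ref{ass:lpaths_indep} gives independence of the paths' \emph{prior} dynamics only; conditional independence given $\Ic_k$ additionally requires every observation likelihood to factor over the paths. As you observe, the $o_m=0$ likelihood \eqref{eq:observationOm0} factors, but the $o_m=1$ likelihood is one minus a product and does not. Concretely: take $L=2$, $\Pbf=\Ibf$, a perfect sensor ($P_{FA}=P_{MD}=0$), both paths independently uniform on columns $\{1,2\}$, sense column $1$, and observe a detection. The posterior puts mass $1/3$ on each of the states $(1,1)$, $(1,2)$, $(2,1)$; each per-path marginal is then $(2/3,\,1/3)$, and $(2/3)(2/3)=4/9\neq 1/3$, so the posterior is not the product of its marginals and $\pibf_{k+1}$ cannot be recovered from $\omegabf_{k+1}$. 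Hence the product form is not exactly preserved under detection events, and the factorization step in the paper's proof does not follow from Assumption~\ref{ass:lpaths_indep} alone. The proposition is exact for $L=1$; for $L\ge 2$ the defensible reading is the one in your closing sentence: $\omegabf_k$ is the sufficient statistic of the product-form (mean-field) belief model that the paper's own update rule \eqref{eq:Pr_observation} maintains self-consistently by averaging over the other paths' marginals, rather than an exact sufficient statistic for the true posterior. Your attempt is therefore incomplete as a proof, but the incompleteness is inherent to the claim itself, and your analysis exposes precisely the step that the paper's proof glosses over.
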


\begin{proof}
        To prove this statement, we only need to show that $\pibf_k$ can be
expressed by the elements of $\omegabf_k$ since $\pibf_k$ is a
sufficient statistic. Consider $\pi_{k,n}$ for each enumerated
state
    $\sbf^{(n)}$ in $\Sc$.  The state $\sbf^{(n)}$  can be represented by
$(i_1,i_2,\cdots,i_L)$ that means that the $l$-th path is
located in the $i_l$-th column of the VCM, $l=1,\cdots,L$
(see Section
    \ref{sec:multipathscase}). Then, we have
\begin{eqnarray*}
\pi_{k,n} &=& \text{Pr}\{ S_k = (i_1,i_2,\cdots,i_L) |
{\mathcal{I}}_k\}\\
&=& \text{Pr}\{ i_1| {\mathcal{I}}_k\} \times \cdots \times
\text{Pr}\{i_L|
{\mathcal{I}}_k\}\\
&=& \omega_{k,1,i_1}\omega_{k,2,i_2}\cdots \omega_{k,L,i_L},
\end{eqnarray*}
    where $\Ic_k$ represents
    the information from all the previous slots before slot $k$, and the second equality follows from Assumption
    \ref{ass:lpaths_indep} of independent paths.
\end{proof}

\vspace{0.5em} Now we propose a greedy training beam sequence
design algorithm using the reduced belief vector $\omegabf_k$. If
the reduced belief vector $\omegabf_k$ is given at the beginning
of slot $k$ and the pilot beam sequence
 $\abf$ is chosen for the slot, then the immediate expected reward
 $\Rc(\pibf_k,\abf)~(=\langle \Rbf(\abf), \pibf_k \rangle)$
at slot $k$  in \eqref{eq:immediateExpRew1} can be expressed in
terms of $\omegabf_k$ as
\begin{equation}\label{eq:Rewardbyreducedbeliefvector}
    {\mathcal{R}}'(\omegabf_k,\abf) = \sum_{l=1}^{L}\left(\sum_{m=1}^{M_p} \text{Pr}\{ o_m = 1 | \Omega_k(l,a_m) =1\}\sum_{n=1}^{N_t}\omega_{k,l,n}p_{n,a_m}\right)
\end{equation}
due to the independence of the paths, where the binary random
variable $\Omega_k(l,a_m)$ is defined as
\[
\Omega_k(l,a_m) = \left\{
\begin{array}{ll}
1, & \mbox{if the $l$-th path is located at the $a_m$-th column of the VCM at slot $k$},\\
0, & \mbox{otherwise}.
\end{array}
 \right.
\]
Here, $\sum_{n=1}^{N_t}\omega_{k,l,n}p_{n,a_m}$ is the probability
that the $l$-th path is in the $a_m$-th column of the VCM at slot
$k$ after the state transition within the slot, and $\text{Pr}\{
o_m = 1 | \Omega_k(l,a_m) =1\}$ is the probability that the
receiver detects the $l$-th path successfully when the $l$-th path
is in the $a_m$-th column of the VCM. $\text{Pr}\{ o_m = 1 |
\Omega_k(l,a_m) =1\}$ in \eqref{eq:Rewardbyreducedbeliefvector} is
related to the miss detection probability of the receiver
detector.  To obtain a tractable expression for the immediate
expected reward, we assume that we have a reasonable path sensing
SNR $N_tN_rP_t \xi^2/\sigma_N^2$ and the miss detection
probability $P_{MD}$ of the receiver detector is zero, i.e.,
$\text{Pr}\{ o_m = 1 | \Omega_k(l,a_m) =1\} =1$.
 Then,
\eqref{eq:Rewardbyreducedbeliefvector} can be rewritten as
\begin{eqnarray}
 {\mathcal{R}}'(\omegabf_k,\abf) &=& \sum_{l=1}^{L}\sum_{m=1}^{M_p} \sum_{n=1}^{N_t}\omega_{k,l,n}p_{n,a_m}   \nonumber  \\
                                 &=& \sum_{l=1}^{L}\sum_{n=1}^{N_t} \omega_{k,l,n}\sum_{m=1}^{M_p} p_{n,a_m}   \nonumber \\
                                 &=& \left[ \underbrace{\sum_{m=1}^{M_p} p_{1,a_m} ,\cdots,\sum_{m=1}^{M_p} p_{N_t,a_m}}_{\mbox{1st path}} ,\cdots,
                                                 \underbrace{\sum_{m=1}^{M_p} p_{1,a_m}, \cdots,\sum_{m=1}^{M_p} p_{N_t,a_m}}_{\mbox{$L$-th path}}   \right] \omegabf_k   \nonumber \\
                                  &\stackrel{(a)}{=}& \sum_{m \in
                                  \abf}\Pbf'(m,:)\omegabf_k
                                  \label{eq:reducedSufStatRewardNew}
\end{eqnarray}
where $\Pbf' := [\underbrace{\Pbf^T}_{\mbox{1st}}, \Pbf^T, \cdots, \underbrace{\Pbf^T}_{\mbox{$L$-th}}]$ with $\Pbf$ defined in \eqref{eq:transitionprobability}, and $\Pbf'(m,:)$ denotes the $m$-th row of $\Pbf'$. (In step (a), the summation went out of the bracket.)
Therefore, the greedy training beam sequence design problem of choosing $\abf$ that maximize the immediate expected reward at slot $k$ reduces to choosing $\abf$ such that
\begin{equation}\label{eq:reducedoptimal}
    \abf^* = \mathop{\arg\max}_{\abf}  \sum_{m \in \abf}\Pbf'(m,:)\omegabf_k
\end{equation}
Thus, for the optimal choice of the training beam sequence for the greedy policy, from \eqref{eq:reducedoptimal}, we only need to find the set of the indices that correspond to the
largest $M_p$ values in the $N_t \times 1$ vector $\Pbf'\omegabf_k$, which can be obtained by $N_t^2L$ real multiplications.
Note that in the proposed solution we do not need to consider ${N_t}\choose{M_p}$ complexity for $\abf$, but one time sorting of an $N_t \times 1$ vector is enough!

At the end of slot $k$, the transmitter updates the reduced belief
vector from $\omegabf_k$ to $\omegabf_{k+1}$ based on the selected
pilot beam sequence and the observed feedback information at the
current slot. This update process is given in the following
proposition.

\begin{proposition} With the action vector $\abf$ and observation $\obf^{(j)}$ for slot $k$, the updated reduced sufficient statistic is given by
\begin{equation}\label{eq:reducedupdate}
   {\mathcal{T}}(\omegabf_k | \abf, \obf^{(j)}) := \omegabf_{k+1}  = [\omega_{k+1,1,1}, \omega_{k+1,1,2}, \cdots, \omega_{k+1,L,N_t}],
\end{equation}
where each element of $\omegabf_{k+1} $ is obtained by Bayes's rule as
\begin{equation}\label{eq:reducedupdate1}
 \omega_{k+1,l,i} =\frac{{q'}_{ijl}^\abf \sum_{n=1}^{N_t} \omega_{k,l,n}p_{ni}}{\sum_{i=1}^{N_t}{q'}_{ijl}^\abf \sum_{n=1}^{N_t} \omega_{k,l,n}p_{ni}},
\end{equation}
and  ${q'}_{ijl}^\abf$ is
the probability that the $j$-th feedback information $\obf^{(j)}$ is observed
conditioned on that the $l$-th path is located in the $i$-th column of the VCM after state transition within the slot
 and
the pilot beam sequence $\abf$ is transmitted.
The quantity ${q'}_{ijl}^\abf$ is given by
\begin{equation}\label{eq:q'define}
    {q'}_{ijl}^{\abf} = \prod_{m=1}^{M_p} \text{Pr} \{o_m = o_m^{(j)} | \Omega_{k+1}(l,i)=1, \abf, \omegabf_k\}.
\end{equation}
where $\text{Pr}\{o_m^{(j)} = 1 |
\Omega_{k+1}(l,i)=1,\abf,\omegabf_k \}=1-\text{Pr}\{o_m^{(j)} = 0|
\Omega_{k+1}(l,i)=1,\abf,\omegabf_k \}$, and
\begin{align}\label{eq:Pr_observation}
  & \text{Pr}\{o_m^{(j)} = 0 | \Omega_{k+1}(l,i)=1,\abf,\omegabf_k \}   \nonumber \\
             & ~~~~~~~~~~~=\left\{
                    \begin{array}{ll}
                        P_{MD}^\prime, & \text{if}~ a_m = i, \\
                        \prod_{s =1 , s\neq l }^L \left( \sum_{t=1}^{N_t} \omega_{k,s,t}(1- p_{t a_m})\right)(1- P_{FA}) & \\
                        \quad + \left(1 -\prod_{s =1 , s\neq l }^L \left( \sum_{t=1}^{N_t} \omega_{k,s,t}(1- p_{t a_m})\right)\right) P_{MD}^\prime,  & \text{if}~ a_m \neq i.
                    \end{array}
                \right.
\end{align}
Here, $P_{MD}^\prime$ is the probability of miss detection at the
channel sensor when {\em at least} one path exists in the sensed
column of the VCM.
\end{proposition}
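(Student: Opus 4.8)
The plan is to recognize \eqref{eq:reducedupdate1} as the ordinary Bayes update applied to the \emph{marginal} posterior of the $l$-th path, exploiting the product form $\pi_{k,n}=\omega_{k,1,i_1}\cdots\omega_{k,L,i_L}$ established in the preceding proposition. First I would write $\omega_{k+1,l,i}=\text{Pr}\{\Omega_{k+1}(l,i)=1\mid \omegabf_k,\abf,\obf^{(j)}\}$ and apply Bayes's rule, so that $\omega_{k+1,l,i}$ is proportional to a predictive prior times a likelihood. The predictive prior is the probability that the $l$-th path sits in column $i$ after the within-slot transition, namely $\sum_{n=1}^{N_t}\omega_{k,l,n}p_{ni}$, exactly as in the full-state update \eqref{eq:beliefupdate1} but restricted to a single path; the normalizing denominator is then the sum over $i$ of the numerators. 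This reduces the proposition to identifying the likelihood factor with ${q'}_{ijl}^{\abf}$.

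Second, I would establish the factorization \eqref{eq:q'define} of the likelihood over the $M_p$ pilot symbol times. Since conditioning on the post-transition path locations fixes the hypotheses $\Hc_0/\Hc_1$ for every AoA bin, the only remaining randomness is the receiver noise, which is independent across the $M_p$ symbols (the same reasoning as step~(a) in \eqref{eq:qijabfprodindep}). Hence $\text{Pr}\{\obf^{(j)}\mid \Omega_{k+1}(l,i)=1,\abf,\omegabf_k\}$ splits into the product $\prod_{m=1}^{M_p}\text{Pr}\{o_m=o_m^{(j)}\mid \Omega_{k+1}(l,i)=1,\abf,\omegabf_k\}$, which is precisely ${q'}_{ijl}^{\abf}$.

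Third, and this is the crux, I would evaluate each single-symbol factor \eqref{eq:Pr_observation} by a case split on whether the $m$-th pilot probes the column where the $l$-th path has been pinned. If $a_m=i$, the $l$-th path itself occupies the sensed column, so at least one path is present and $\text{Pr}\{o_m=0\}=P_{MD}^\prime$ by the definition of the sensor miss probability. If $a_m\neq i$, the sensed column $a_m$ is occupied only if some \emph{other} path has landed there; by the independence in Assumption~\ref{ass:lpaths_indep}, conditioning on $\Omega_{k+1}(l,i)=1$ leaves the other paths' post-transition distributions unchanged, so I would marginalize over them. The probability that all other paths avoid $a_m$ factors as $\prod_{s\neq l}\big(\sum_{t=1}^{N_t}\omega_{k,s,t}(1-p_{t a_m})\big)$, in which event the bin is empty and $\text{Pr}\{o_m=0\}=1-P_{FA}$; in the complementary event at least one other path is present and $\text{Pr}\{o_m=0\}=P_{MD}^\prime$. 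Combining these by total probability yields exactly the two-line expression in \eqref{eq:Pr_observation}, and the $o_m=1$ case follows by complementation.

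The main obstacle is the $a_m\neq i$ case: because $o_m$ depends on the occupancy of column $a_m$ by \emph{any} path, the per-path likelihood is not a single-path detection probability but a mixture obtained by integrating out the other $L-1$ paths. The enabler is Assumption~\ref{ass:lpaths_indep}: independence lets the ``all other paths avoid $a_m$'' probability factor across $s\neq l$ and, crucially, makes conditioning on the $l$-th path's location irrelevant to the others, so that \eqref{eq:reducedupdate1} is the \emph{exact} marginal posterior given the product-form belief at slot $k$ rather than a mean-field approximation.
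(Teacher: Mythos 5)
Your proposal is correct and follows essentially the same route as the paper's own proof: Bayes's rule with the predictive prior $\sum_n \omega_{k,l,n}p_{ni}$ (the paper obtains this by explicitly chaining over the pre-transition column $n$ and invoking the Markov assumption), the product factorization of ${q'}_{ijl}^{\abf}$ over pilot symbols via receiver-noise independence, and the identical case split on $a_m=i$ versus $a_m\neq i$ with the other $L-1$ paths marginalized out by Assumption~\ref{ass:lpaths_indep}. Your explicit remark that conditioning on the $l$-th path's location leaves the other paths' distributions untouched is exactly the (implicit) step the paper relies on in evaluating \eqref{eq:lemma1reducedLast}, so no gap remains.
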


\vspace{1em}
\begin{proof}
    Applying Bayes's formula, we express $\omega_{k+1,l,i}$ in \eqref{eq:reducedupdate} as     \begin{eqnarray}\label{eq:reducedbayes}
        \nonumber \omega_{k+1,l,i} &=& \text{Pr}\{\Omega_{k+1}(l,i) =1 | \obf^{(j)}, \abf, \omegabf_k \}\\
                                   &=& \frac{\text{Pr}\{\Omega_{k+1}(l,i) =1,  \obf^{(j)} | \abf, \omegabf_k \}}{\sum_{i=1}^{N_t}\text{Pr}\{\Omega_{k+1}(l,i) =1,  \obf^{(j)} | \abf, \omegabf_k \} }.
    \end{eqnarray}
    The numerator of \eqref{eq:reducedbayes} is expressed as follows:
    \begin{eqnarray}
    \nonumber     && \text{Pr}\{\Omega_{k+1}(l,i) =1, \obf = \obf^{(j)} | \abf, \omegabf_k \} \\
    \nonumber     &&~~~ \quad =\sum_{n=1}^{N_t} \text{Pr}\{ \Omega_k(l,n) = 1 , \Omega_{k+1}(l,i) =1, \obf = \obf^{(j)} | \abf, \omegabf_k \} \\
    \nonumber     &&~~~ \quad =\sum_{n=1}^{N_t} \text{Pr}\{ \Omega_{k}(l,n)=1 |\abf,\omegabf_k \}  \cdot \text{Pr}\{ \Omega_{k+1}(l,i)=1,\obf= \obf^{(j)} | \Omega_k(l,n) =1 ,\abf,\omegabf_k \} \\
    \nonumber    &&~~~ \quad =\sum_{n=1}^{N_t} \text{Pr}\{ \Omega_{k}(l,n)=1 |\abf,\omegabf_k \}  \cdot \text{Pr}\{ \Omega_{k+1}(l,i)=1| \Omega_k(l,n) =1 ,\abf,\omegabf_k \} \\
    \nonumber    &&~~~ \quad \quad \quad ~~~ \cdot \text{Pr}\{ \obf = \obf^{(j)} | \Omega_{k+1}(l,i)=1, \Omega_k(l,n) =1 ,\abf,\omegabf_k \}\\
                 &&~~~ \quad = \sum_{n=1}^{N_t} \omega_{k,l,n} \cdot p_{ni} \cdot  \text{Pr}\{ \obf = \obf^{(j)} | \Omega_{k+1}(l,i)=1, \abf,\omegabf_k \},
    \end{eqnarray}
    where the last step follows by the definitions of $\omega_{k,l,n}$ and
    $p_{ni}$ and the Markovian assumption on the state transition.
    Denoting  $\text{Pr}\{ \obf = \obf^{(j)} | \Omega_{k+1}(l,i)=1, \abf,\omegabf_k \}$ by ${q'}_{ijl}^{\abf}$, we have \eqref{eq:reducedupdate1}.

    As in deriving  \eqref{eq:qijabfprodindep}, once the state is given, the remaining randomness is the receiver noise. Thus, by the assumption of independent receiver noise over symbol time, we have
    \begin{eqnarray}
    \nonumber     {q'}_{ijl}^{\abf} &:=&  \text{Pr}\{ \obf = \obf^{(j)} | \Omega_{k+1}(l,i)=1, \abf,\omegabf_k \} \\
                          &=& \prod_{m=1}^{M_p} \text{Pr}\{ o_m = o_m^{(j)} | \Omega_{k+1}(l,i) =1, \abf, \omegabf_k\}.
    \end{eqnarray}
If the $l$-th path is in the $i$-th column of the VCM  and the
$i$-th column of the VCM is chosen to be sensed by the $m$-th
pilot beam in the slot (i.e., $a_m=i$),
    the probability that $o_m^{(j)}=0$ is observed  is
    equal to the miss detection probability $P_{MD}^\prime$ of the channel sensor. This explains the first part of \eqref{eq:Pr_observation}.
    If the $m$-th pilot beam
    senses
    a column that does not contain the $l$-th path located at the $i$-th column of the VCM (i.e., $a_m \ne i$), then the probability
    that $o_m^{(j)}=0$ is observed
    is equal to
    \begin{equation}  \label{eq:lemma1reducedLast}
           \prod_{s =1 , s\neq l }^L \left( \sum_{t=1}^{N_t} \omega_{k,s,t}(1- p_{t a_m})\right)(1- P_{FA})
        + \left(1 -\prod_{s =1 , s\neq l }^L \left( \sum_{t=1}^{N_t} \omega_{k,s,t}(1- p_{t a_m})\right)\right)
        P_{MD}^\prime.
   \end{equation}
    The first term in \eqref{eq:lemma1reducedLast}  is the probability
    that no other path than the $l$-th path is located in the $a_m$-th column of
    the VCM
    and a false alarm does not occur, and the second term  in \eqref{eq:lemma1reducedLast} is
    the probability that at least one path other than the $l$-th path exists in the $a_m$-th column of the VCM
    but the channel sensor misses this path. This explains the second part of \eqref{eq:Pr_observation}.
\end{proof}

For simplicity, we use $P_{MD}$ in the single path case instead of
$P_{MD}^\prime$ neglecting  the case that  there  can exist more
than one path in the sensed column of the VCM.   However, this
will have a negligible impact on the performance because the event
that more than one path are located in the same column of the VCM
will rarely occur due to the sparsity of large mm-Wave MIMO
channels. Note that the action vector computation and the belief
vector update required for the proposed greedy algorithm based on
the reduced sufficient statistic can be run with insignificant
computational complexity.

\subsection{Extensions}
\label{sec:extensions}

{\em Multi-resolution:}  Until now we have assumed that the
transmitter picks one column of $\Abf_T$ as the pilot beam at each
pilot symbol time and search one column of the VCM
 at each pilot symbol time,  i.e., $\xbf_n
\in \{\sqrt{P_t}\abf_{TX}(\tilde{\theta}^t_{1})$, $\cdots$,
$\sqrt{P_t}\abf_{TX}(\tilde{\theta}^t_{N_t}) \}$. This assumption
can be relaxed depending on the path SNR. Recall that the path SNR
is given by $N_tN_rP_t \xi^2/\sigma_N^2$, where $\xi^2$ is the
path gain square and $\sigma_N^2$ is the noise variance. When the path SNR is
 sufficiently high, we can use a superposed pilot beam at each symbol time, i.e.,
\[
\xbf_n \in \left\{ \sum_{i=1}^{N_a}\sqrt{\frac{P_t}{N_a}}\abf_{TX}(\tilde{\theta}^t_{i}), \sum_{i=N_a+1}^{2N_a}\sqrt{\frac{P_t}{N_a}}\abf_{TX}(\tilde{\theta}^t_{i}),\cdots, \sum_{i=N_t-N_a+1}^{N_t}\sqrt{\frac{P_t}{N_a}}\abf_{TX}(\tilde{\theta}^t_{i})\right\},
\]
where $N_a$ consecutive columns of $\Abf_T$ are superposed as one pilot beam for a symbol time under the assumption that $N_a$ is a divisor of $N_t$. This is equivalent to effectively  making the search grid in the VCM small and shortening the initial lock-in time. The corresponding state transition probability matrix can be obtained from the original state transition matrix $\Pbf$. Once the non-zero paths in the smaller grid are detected, we can switch to the original high-resolution narrow pilot beam by properly assigning the belief vector. Here, $N_a$ should be determined appropriately so that
$N_tN_rP_t \xi^2/(N_a \sigma_N^2)$ is high enough to make the probability of miss detection by each superposed pilot beam  small.

{\em 2-dimensional search with limited receiver beamforming:} We
have assumed that the receiver has a filter bank that searches all
AoA directions. This requires $N_r$ RF chains at the receiver.
When the receiver has one RF chain, typically the beamforming is
implemented at the antenna array. That is, the antenna array has
multiple receive beamforming directions and switching is performed
among the receive beamforming directions. In this case, the
proposed POMDP framework can be extended to search the non-zero
paths in a truly 2-dimensional grid.

{\em The mult-user case:} Until now we have considered the optimal
training beam design for channel estimation and tracking via POMDP
in single-user MIMO systems. We can apply our POMDP formulation to
downlink multi-user MIMO channels with simple modifications. The
MIMO channel with multiple users can be considered as the MIMO
channel with more paths. In the multi-user case,  the transmitter
needs to know which user each path belongs to. This identification
can be delivered in feedback information from users, and the POMDP
formulation with this additional feedback information can be used
in the multi-user MIMO case.

\section{Numerical results}
\label{sec:NumericalResult}

In this section, we provide some numerical results to evaluate the
performance of the proposed pilot design and channel estimation
method. Throughout the simulation, we fixed the number of paths as
$L=2$, the path gain $\xi^2=1$, and the noise power
$\sigma_N^2=1$;   set the transmit power $P_t$ in each case
so that the path SNR $N_tN_rP_t \xi^2/\sigma_N^2$ is 20 dB; and used a Neyman-Pearson detector for the channel sensor at the receiver.   We
assumed that each path behaves according to the state transition
probability matrix \eqref{eq:bandedstructure_random} under
Assumption \ref{ass:lpaths_indep}.

\begin{figure}[!t]
\centerline{ \SetLabels
\L(0.5*-0.1) (a) \\
\endSetLabels
\leavevmode
\strut\AffixLabels{
\scalefig{0.6}\epsfbox{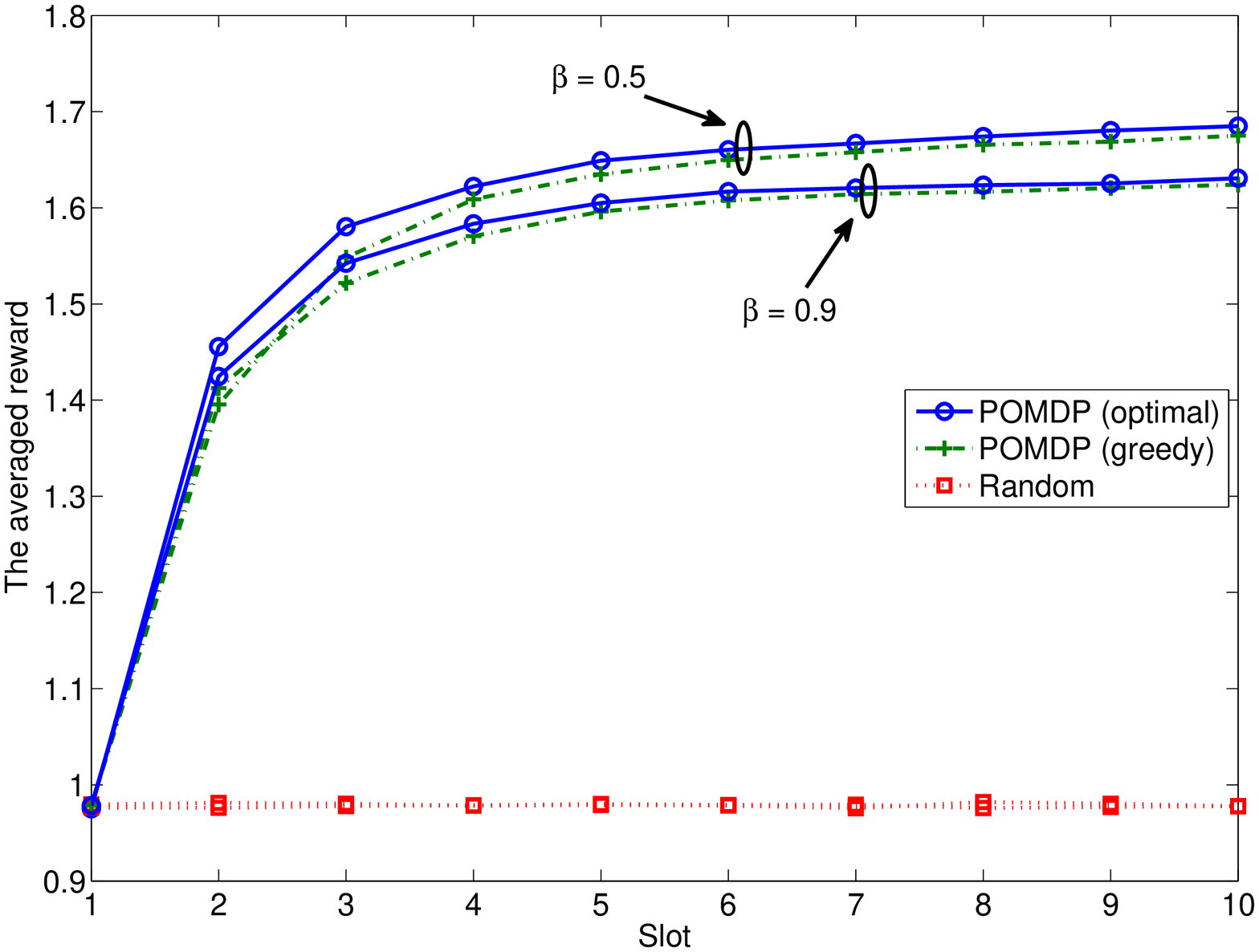}
 } } \vspace{0.8cm}
\centerline{ \SetLabels
\L(0.5*-0.1) (b) \\
\endSetLabels
\leavevmode
\strut\AffixLabels{
\scalefig{0.6}\epsfbox{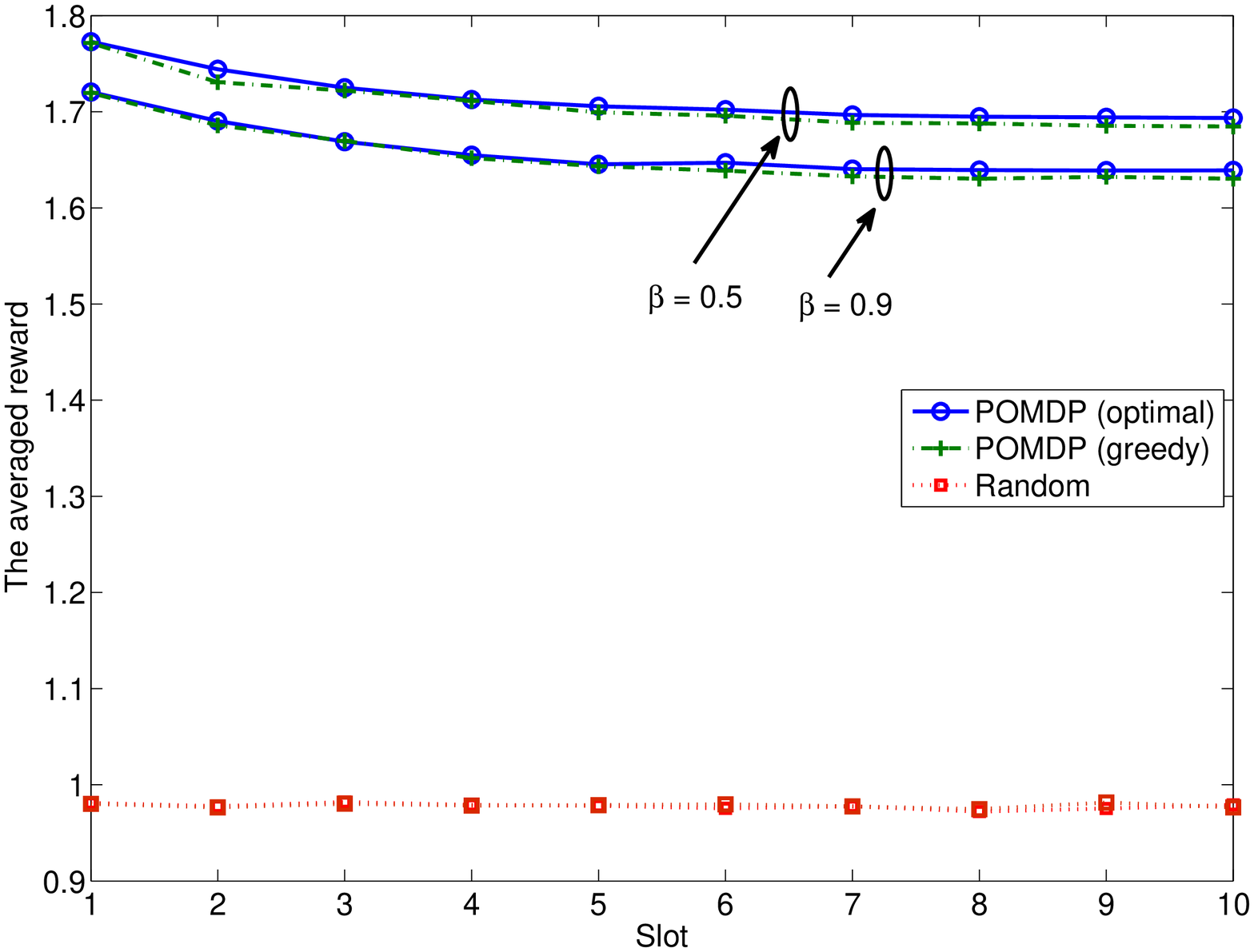}
 } } \vspace{0.75cm}
\caption{Averaged per-slot reward with respect to time ($N_t =8,
~N_r = 4, ~M_p = 4, ~L =2, ~\lambda =0, ~M_s=10$)}
\label{fig:Averagedreward}
\end{figure}

First, we considered a small MIMO system with $N_t = 8$, $N_r =
4$, and $M_p =4$ so that the optimal policy can be computed with
reasonable complexity. Here, we set the false alarm probability of the channel sensor for each AoA direction as $P_{FA}=0.05$, and  $B=1$ and $\lambda=0$ for
$\Pbf_{\beta,\lambda}^B$ in \eqref{eq:bandedstructure_random}, and
considered two different values of $\beta$: $\beta=0.5$ (slow
fading) and $\beta=0.9$ (fast fading). Fig.
\ref{fig:Averagedreward} shows the performance of three methods in
this case: the optimal POMDP strategy, the greedy POMDP strategy
and a random selection strategy that selects $M_p$ pilot beams
randomly at each slot. The $y$-axis shows
 the averaged per-slot reward, i.e., the
averaged number of detected paths for each slot. (The per-slot
reward shown in the figure is the average over 100,000
realizations of the specified Markov random processes of length 10
slots.) We first tested the initial lock-in behavior of the
three methods. Here, assuming no {\it a priori} state
information in the beginning, we set all elements of the initial belief vector  to
be equal.  Fig. \ref{fig:Averagedreward} (a) shows the initial
lock-in behavior in this case. On the other hand, Fig.
\ref{fig:Averagedreward} (b) shows the channel tracking behavior
of the three methods. Here, we assumed the exact state
knowledge at the first slot.  It is seen  that the optimal POMDP
method and the greedy POMDP method significantly outperform the
random selection strategy and   the performance difference between
the optimal strategy and the greedy strategy is not significant.
It is also seen that as expected, the channel estimation and
tracking performance is better at $\beta=0.5$ than at $\beta=0.9$.

\begin{figure}[!t]
\centering
\scalefig{0.65}\epsfbox{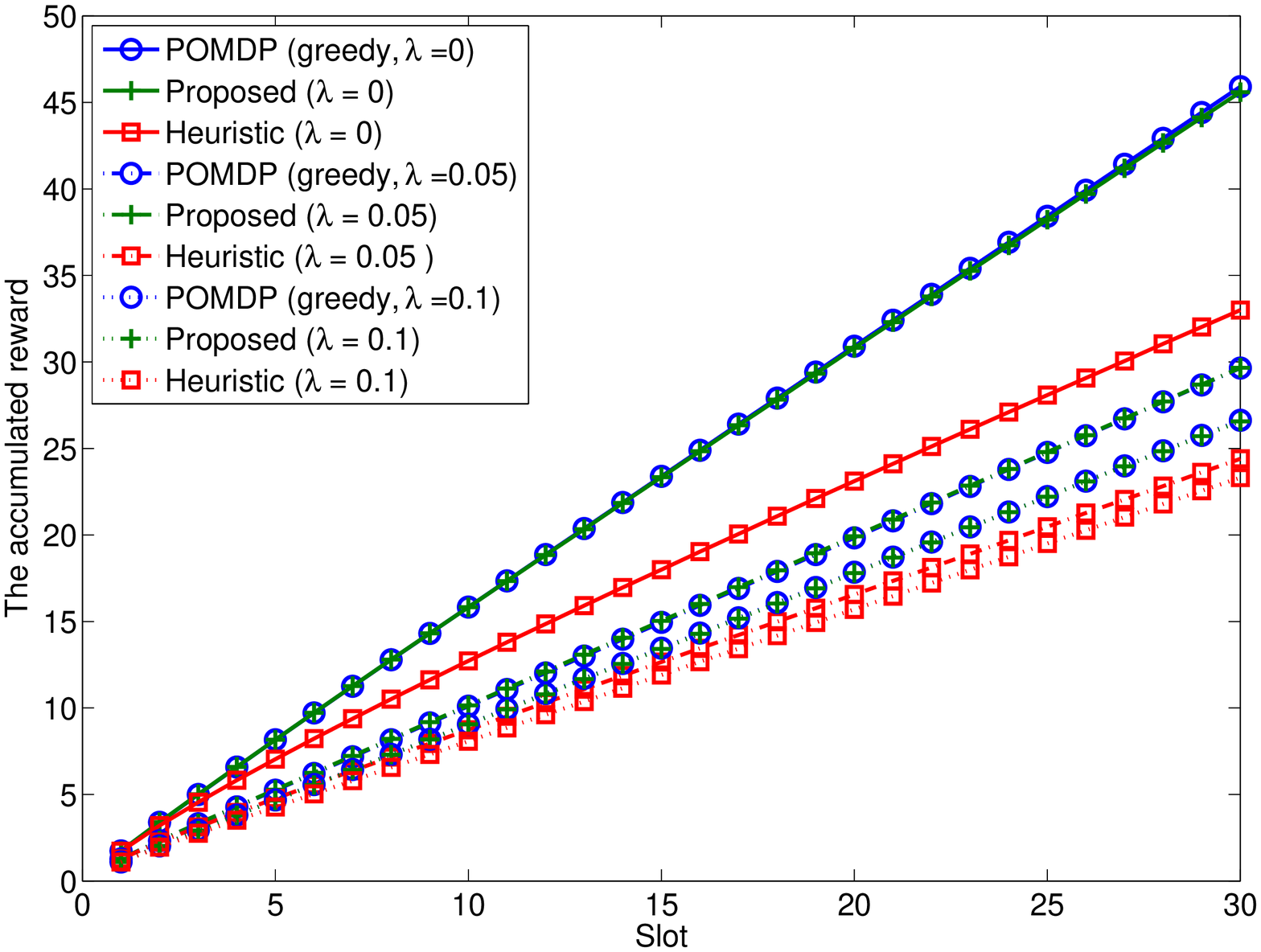}
\caption{Accumulated reward with resect to time ($N_t =16$, $N_r =
4$, $M_p = 6$, $L =2$, $\beta = 0.5$, $M_s=30$)}
\label{fig:accumulated_reward_middleantennaarray}
\end{figure}

\begin{figure}[!h]
\centering
\scalefig{0.65}\epsfbox{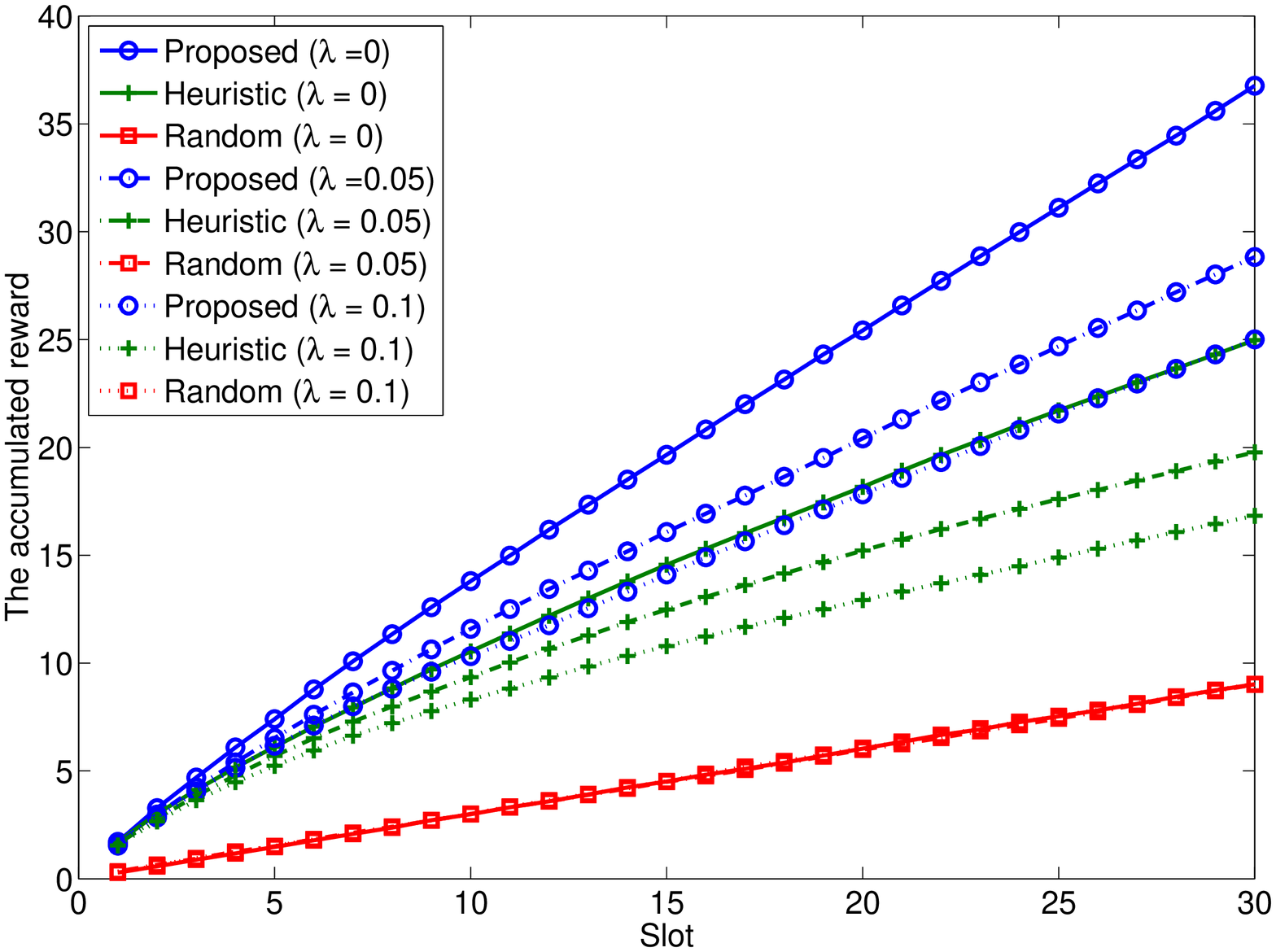}
\caption{Accumulated reward with resect to time ($N_t =64$, $N_r =
16$, $M_p = 10$, $L =2$, $\beta = 0.5$, $M_s=30$)}
\label{fig:accumulated_reward_largeantennaarray}
\end{figure}

Second, we considered a  larger MIMO system with $N_t = 16$, $N_r
= 4$, $M_p = 6$, and $M_s=30$ for which the complexity of
computing the optimal policy is already high. Here, we considered
three pilot beam design and channel estimation methods: the exact
greedy POMDP method, the greedy POMDP method with the proposed
reduced-size sufficient statistic, and a heuristic method. For the
state transition probability matrix, we used
$\Pbf_{\lambda,\beta}^B$ in \eqref{eq:bandedstructure_random} with
$B=2$, $\beta=0.5$ and  varied $\lambda$. We set the false alarm probability of the channel sensor as $P_{FA}=0.05$.
 For the heuristic
method, $3$ pilot symbol times out of $M_p=6$ pilot symbol times
per slot are used to track one path and the remaining $3$ pilot
symbol times are used to track the other path based on the
detection result of the previous slot.  The heuristic method
operates as follows. If a path is detected successively at a slot,
the pilot beams of the next slot are adaptively determined to
track the path by considering the probability of the path's
movement based on the state transition probability matrix. If not,
the pilot beams are unchanged for the next slot.  Fig.
\ref{fig:accumulated_reward_middleantennaarray} shows the averaged
accumulated reward of the three methods with respect to
 time under the assumption that the initial state is known.
The performance in the figure is the average over 10,000
realizations of the specified Markov process of length 30 slots.
Recall that in the proposed reduced-complexity greedy method, the
exact immediate reward \eqref{eq:Rewardbyreducedbeliefvector} is
approximated by  \eqref{eq:reducedSufStatRewardNew}.  It is seen
that the impact of this approximation is negligible and the
proposed reduced-complexity greedy POMDP method performs as well
as the exact greedy POMDP method. It is also seen that the two
POMDP-based greedy methods outperforms the heuristic method that
still incorporates the state transition in a simple manner.
\begin{figure}[!t]
\centerline{ \SetLabels
\L(0.5*-0.1) (a) \\
\endSetLabels
\leavevmode
\strut\AffixLabels{
\scalefig{1}\epsfbox{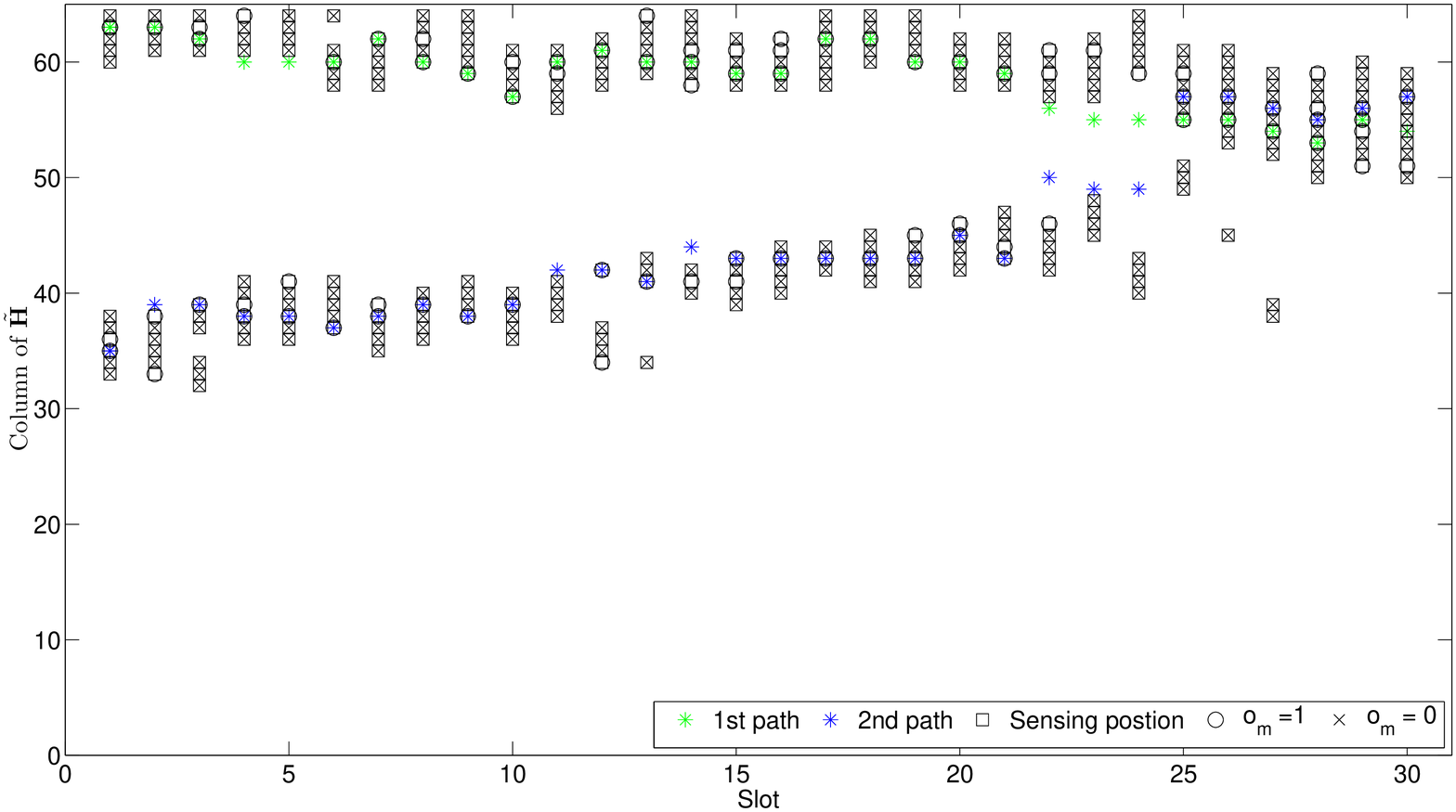}
 } } \vspace{0.8cm}
\centerline{ \SetLabels
\L(0.5*-0.1) (b) \\
\endSetLabels
\leavevmode
\strut\AffixLabels{
\scalefig{1}\epsfbox{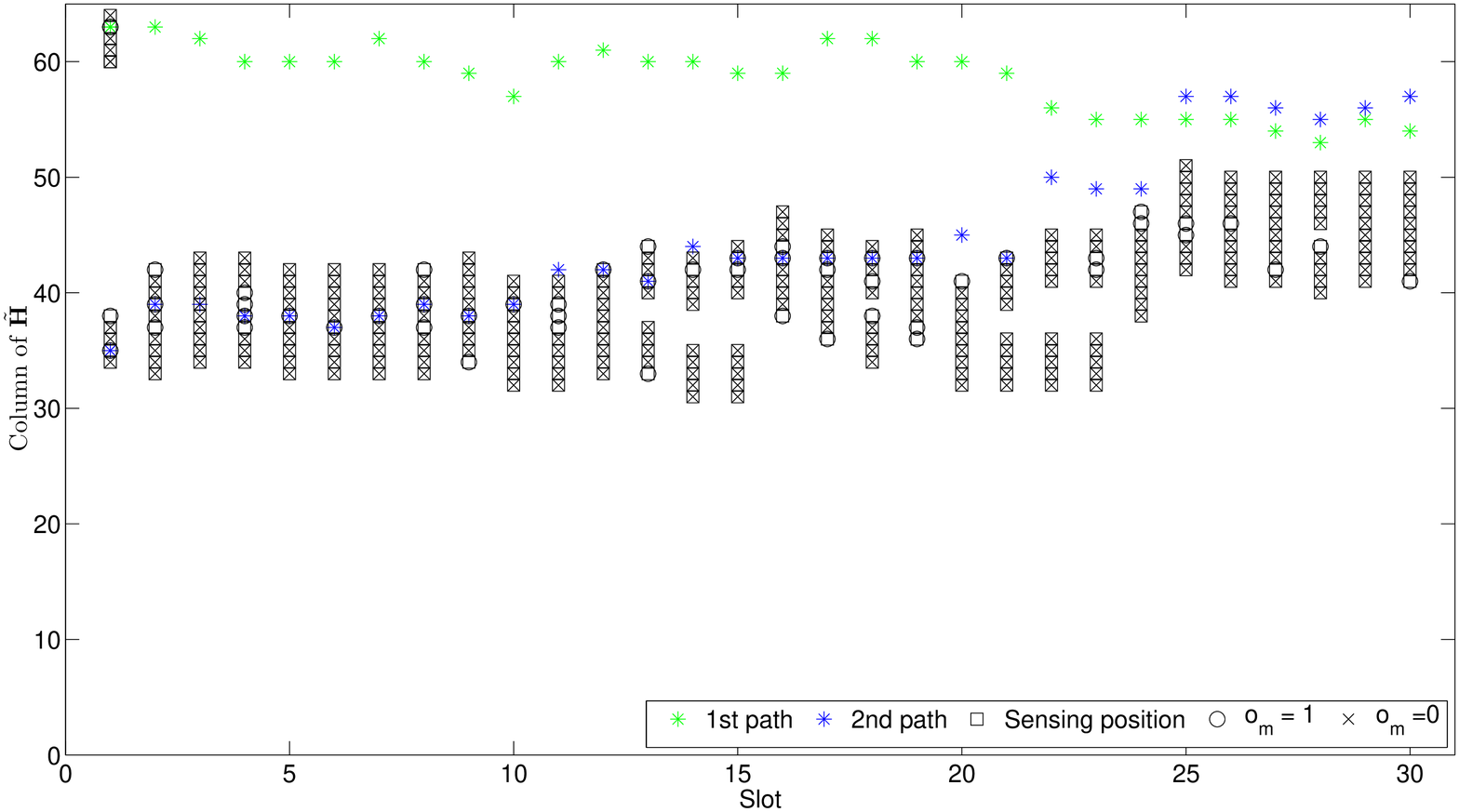}
 } } \vspace{0.75cm}
\caption{The tracking process in the MIMO channel ($N_t =64, N_r = 16, M_p = 10, L =2, \lambda =0$)
 (a): The proposed strategy and (b) : The heuristic strategy.}
\label{fig:Single realization}
\end{figure}

Next, we considered a large MIMO channel with
 $N_t = 64$, $N_r=16$, and $M_p = 10$. Here, we set $B=8$,
 $\beta=0.5$ and varied $\lambda$ for $\Pbf_{\beta,\lambda}^B$.
In
 this case, we only considered the proposed reduced-complexity greedy POMDP
 method and the simple heuristic method that have feasible
 complexity.  Due to the large $N_r$, we set the false alarm probability of the channel sensor as $P_{FA}=0.01$ in this case. (If the per-AoA-direction false alarm probability is not sufficiently small in case of large $N_r$, the overall false alarm probability combining the sensing results of all the AoA directions is too high.)
  Fig. \ref{fig:accumulated_reward_largeantennaarray} shows
the accumulated reward performance (with a known initial state)
with respect to time (averaged over 500 realizations) in this
case. Similar behavior is observed as in Fig.
\ref{fig:accumulated_reward_middleantennaarray}.  Fig.
\ref{fig:Single realization} shows the tracking performance in a
single realization of the Markov channel process of length 30
slots. In Fig. \ref{fig:Single realization}, the green and blue
stars indicate the positions of the two paths at each slot, and
$\Box$'s indicate
 the pilot beam indices at each slot, and $\circ$'s
and $\times$'s indicate the feedback information of path presence
and path absence, respectively, in the sensed index.
 It is seen that the proposed method tracks
the two paths successively, whereas the heuristic strategy is
ineffective. In the case of the heuristic method, it is difficult
to track the path if the path is missed in the previous block, and
it is not possible to distinguish a false alarm.  However, the
proposed POMDP method tracks the paths in each slot by updating
the reduced-size belief vector based on the history, and the false
alarm can be handled in the updating process of the reduced-size
belief vector. Thus, the proposed POMDP-based method is much more
effective in sparse large MIMO channel tracking.

\section{Conclusion}
\label{sec:conclusion}

We have considered the adaptive training beam sequence design
problem for large sparse  mmWave MIMO channels. By imposing a
Markov random walk assumption on the paths' movement, we have
formulated the adaptive training beam sequence design problem as a
POMDP problem. Under this POMDP framework, we have obtained
optimal and suboptimal strategies for adaptive training beam
sequence design  for large sparse  mmWave MIMO channels.
Furthermore, we have derived a fast greedy POMDP algorithm for
this problem with significantly reduced complexity for practical
implementation by proposing a new sufficient statistic for the
decision process. We have considered several extensions of the
proposed method. The proposed training beam design and channel
estimation method can be applied both to the initial channel
learning and to efficient channel tracking after the
channel is initially learned by some other method.

\end{document}